\g@addto@macro{\UrlBreaks}{\UrlOrds}
\renewcommand{\lstlistingname}{List.}
  \crefname{table}{Tab.}{Tab.}
  \Crefname{table}{Tabelle}{Tabellen}
  \crefname{figure}{\figurename}{\figurename}
  \Crefname{figure}{Abbildungen}{Abbildungen}
  \crefname{equation}{Gleichung}{Gleichungen}
  \Crefname{equation}{Gleichung}{Gleichungen}
  \crefname{listing}{\lstlistingname}{\lstlistingname}
  \Crefname{listing}{Listing}{Listings}
  \crefname{section}{Abschnitt}{Abschnitte}
  \Crefname{section}{Abschnitt}{Abschnitte}
  \crefname{paragraph}{Abschnitt}{Abschnitte}
  \Crefname{paragraph}{Abschnitt}{Abschnitte}
  \crefname{subparagraph}{Abschnitt}{Abschnitte}
  \Crefname{subparagraph}{Abschnitt}{Abschnitte}
  \crefname{section}{Sec.}{Sec.}
  \Crefname{section}{Section}{Sections}
  \crefname{listing}{\lstlistingname}{\lstlistingname}
  \Crefname{listing}{Listing}{Listings}
\crefname{enumi}{}{}
\Crefname{enumi}{}{}
\crefname{definition}{Def.}{Def.}
\crefname{corollary}{Cor.}{Cor.}
\crefname{theorem}{Thm.}{Thm.}
\DeclareMathOperator{\V}{\mathbf{V}}
\DeclareMathOperator{\Q}{\mathbf{Q}}
\DeclareMathOperator{\qset}{qset}
\DeclareMathOperator{\outlinks}{edges^{+}}
\newcommand{\rank}{R}
\newcommand{\InnerQSets}{\mathcal{I}}
\newcommand{\Quorums}{\mathcal{U}}
\newcommand{\NodeSetSet}{\mathcal{N}}
\newcommand{\BlockSetSet}{\mathcal{B}}
\newcommand{\SplitSetSet}{\mathcal{S}}
\newcommand{\QSetSpace}{\mathfrak{D}} %
\newcommand{\Min}[1]{\hat{#1}}
\newcommand{\tts}{m}
\DeclarePairedDelimiter\set{\{}{\}}
\DeclarePairedDelimiter\len{\lvert}{\rvert}
\DeclarePairedDelimiter\ceil{\lceil}{\rceil}
\newcommand{\optithreshraw}[1]{\ceil{\frac{2#1+1}{3}}}
\newcommand{\optithresh}[1]{\optithreshraw{\len{#1}}}
\newcommand{\oldASG}{G_\text{AS98}}
\newcommand{\newASG}{G_\text{AS20}}
\begin{document}

\title{The Sum of Its Parts: Analysis of Federated Byzantine Agreement Systems}

\author{Martin Florian \and Sebastian Henningsen \and Charmaine Ndolo \and\\ Björn Scheuermann}

\institute{
  M. Florian, S. Henningsen, C. Ndolo \at Weizenbaum Institute / Humboldt University of Berlin
  \and
  B. Scheuermann \at Technical University of Darmstadt
}

\date{}

\maketitle

\begin{abstract}
Federated Byzantine Agreement Systems (FBASs) are a fascinating new paradigm in the context of consensus protocols.
Originally proposed for powering the Stellar payment network,
FBASs can instantiate Byzantine quorum systems without requiring out-of-band agreement on a common set of validators;
every node is free to decide for itself with whom it requires agreement.
Sybil-resistant and yet energy-efficient consensus protocols can therefore be built upon FBASs,
and the \enquote{decentrality} possible with the FBAS paradigm might be sufficient to reduce the use of environmentally unsustainable proof-of-work protocols.
In this paper, we first demonstrate how the robustness of individual FBASs can be determined,
by precisely determining their safety and liveness buffers and therefore enabling a comparison with threshold-based quorum systems.
Using simulations and example node configuration strategies,
we then empirically investigate the hypothesis that while FBASs can be bootstrapped in a bottom-up fashion from individual preferences,
strategic considerations should additionally be applied by node operators in order to arrive at FBASs that are robust and amenable to monitoring.
Finally,
we investigate the reported \enquote{open-membership} property of FBASs.
We observe that an often small group of nodes is exclusively relevant for determining liveness buffers
and prove that membership in this top tier is conditional on the approval by current top tier nodes if maintaining safety is a core requirement.
\end{abstract}

\keywords{Byzantine quorum systems, asymmetric trust, Byzantine faults, consensus, Stellar, blockchain}

\section{Introduction}
\label{sec:intro}

We study \emph{Federated Byzantine Agreement Systems} (FBASs),
as originally proposed by Mazières~\cite{mazieres2015stellar}.
FBASs are conceptually related to
\emph{Asymmetric Quorum Systems}~\cite{cachin2019asymmetric}
and \emph{Personal Byzantine Quorum Systems}~\cite{losa2019stellar_instantiation}.
While research on consensus protocols has accelerated in the wake of global blockchain enthusiasm,
developments still mostly fall in two extreme categories:
\emph{permissionless}, i.e., open-membership,
as exemplified by Bitcoin's notoriously energy-hungry \enquote{Nakamoto consensus} \cite{nakamoto2008bitcoin},
and \emph{permissioned}, with a closed group of validators,
as assumed both in the classical Byzantine fault tolerance (BFT) literature (e.g., \cite{castro1999practical})
and many state-of-the art protocols from the blockchain world (e.g., \cite{yin2019hotstuff_podc}).
The FBAS paradigm and the works it has inspired suggest a middle way:
\emph{Each} node defines its \emph{own rules} about which groups of nodes it will consider as sufficient validators.
If the sum of all such configurations fulfills a set of properties,
protocols like the \emph{Stellar Consensus Protocol} (SCP)~\cite{mazieres2015stellar} can be defined
that leverage the resulting structure for establishing a live and safe consensus
system~\cite{cachin2020asymmetric,losa2019stellar_instantiation,garcia2019deconstructing,garcia2018fbqs,lokhava2019stellar_payments}.

In the original FBAS model~\cite{mazieres2015stellar},
which this paper is based on,
these properties are foremost \emph{quorum availability despite faulty nodes}, which enables \emph{liveness},
and \emph{quorum intersection despite faulty nodes},
which makes it possible for consensus protocols to prevent forks and thus enables \emph{safety}.
In a practical deployment, it is seldom clear which nodes are faulty,
and in this way the level of risk w.r.t. to liveness and safety is uncertain.
We propose an intuitive and yet precise analysis approach for determining the level of risk,
based on enumerating \emph{minimal blocking sets} and \emph{minimal splitting sets}---minimal sets of nodes that,
if faulty,
can by themselves compromise \emph{liveness} and \emph{safety}.
We provide algorithms for determining these sets in arbitrary FBASs and make available an efficient software-based analysis framework\footnote{
  \url{https://github.com/wiberlin/fbas_analyzer}
}.
To the best of our knowledge,
we are the first to propose and implement an analysis methodology for the assessment of the liveness and safety guarantees of FBAS instances
that yields precise results as opposed to heuristic estimations.
As previously shown in \cite{garcia2018fbqs},
FBASs induce \emph{Byzantine quorum systems} as per Malkhi and Reiter \cite{malkhi1998byzantine}---hence our results might be of interest to more classical formalizations as well.
For example,
we explicitly distinguish between sets of nodes that can undermine liveness and such sets that can undermine safety,
highlighting that in an actual system the threat to liveness and the threat to safety can differ both in structure and in severity.

We apply our analysis approach and tooling in an empirical study
that investigates the emergence of FBASs from existing inter-node relationships,
as encoded in, e.g., trust graphs.
Based on example \emph{configuration policies},
we demonstrate that while FBASs can be bootstrapped in a bottom-up fashion from individual preferences,
strategic considerations should additionally be applied by node operators in order to arrive at FBASs that are robust and amenable to monitoring.

Strategic considerations can increase centralization, on top of what is already implied by individual preferences.
We observe that centralization manifests as a \emph{top tier} of nodes that is solely relevant when determining liveness buffers.
We contribute a proof that if maintaining basic safety guarantees is a minimal strategic requirement of node operators,
top tiers are effectively \enquote{closed-membership} in the sense that a top tier's composition can only change with cooperation of current top tier nodes.
This casts doubt on the reported \enquote{open-membership} property of FBASs---while any node can become part of the FBAS,
our results show that only nodes approved by the current top tier can become relevant for consensus.

Following an overview of related work (\cref{sec:related_work}) and the formal introduction of the FBAS model and its interpretation in practical deployments (\cref{sec:fba}),
we structure our paper around our main original contributions:

\begin{itemize}
  \item An analysis framework for reasoning about safety and liveness guarantees in concrete FBASs (\cref{sec:analysis}).
  \item Algorithms for efficiently performing the proposed analyses (\cref{sec:algorithms}).
  \item A simulation-based exploration of possible configuration policies and their effects (\cref{sec:qsc}).
  \item Formal proof that membership in an FBAS' top tier is only \enquote{open} if a violation of safety is considered acceptable (\cref{sec:openness}).
\end{itemize}

As appendices, we prove a number of additional corollaries and theorems (\cref{app:proofs}) and present results from applying our analysis methodology to an interesting toy network (\cref{app:cascading_example}) and the current Stellar network (\cref{app:stellar_example}).

\section{Related work}%
\label{sec:related_work}

\emph{Federated Byzantine Agreements Systems} were first proposed in \cite{mazieres2015stellar},
together with the \emph{Stellar Consensus Protocol} (SCP),
a first protocol for this setting.
The viability of SCP has been proven formally~\cite{lokhava2019stellar_payments,garcia2019deconstructing,garcia2018fbqs}
and the protocol is in active use in two large-scale payment networks~\cite{lokhava2019stellar_payments,ndolo2021crawling}.
The FBAS notion has furthermore been generalized and reformulated in different ways,
creating bridges to more classical models and
enabling the development of additional protocols~\cite{losa2019stellar_instantiation,cachin2020asymmetric,cachin2019asymmetric}.
Among other things, as shown by Garc{\'i}a-P{\'e}rez and Gotsman \cite{garcia2018fbqs},
FBASs with \enquote{safe} configurations induce Byzantine quorum systems \cite{malkhi1998byzantine}.
In this work,
we are less interested in the mechanics of specific protocols for the FBAS setting
but instead investigate the \emph{conditions} they require for achieving safety, liveness and performance.
We investigate how many node failures (and of which nodes) an FBAS can tolerate before the conditions to safety and liveness are compromised,
and how individual node configuration policies influence these \enquote{buffers}.

Previously,
consensus protocols relevant in practice (such as PBFT~\cite{castro1999practical}) have relied on a symmetric threshold model.
In a typical instantiation with $3f+1$ nodes that can tolerated up to $f$ Byzantine node failures,
each $2f+1$ nodes form a (minimal) quorum.
This model naturally gives rise to quorum systems that are trivial to analyze,
i.e., for which it is trivial to determine under which maximal fail-prone sets \cite{malkhi1998byzantine} consensus is still possible.
The possibility for quorum systems that lack symmetry
(that is opened up by the FBAS paradigm and related notions)
makes the investigation of a more general analysis approach necessary.

A heuristics-based methodology for analyzing FBAS instances was previously proposed in \cite{kim2019stellar_secure},
focusing on the identification of central nodes and threats to FBAS liveness.
We propose a novel analysis approach that is not heuristics-based and hence yields precise insights,
based on a solid theoretic foundation.
As in \cite{kim2019stellar_secure}, we apply our methodology to snapshots of the live Stellar network (cf. \cref{app:stellar_example}).

Bracciali et al.~\cite{bracciali2021decentralization} explore fundamental bounds on the decentrality in open quorum systems.
One of their central arguments with regards to the FBAS paradigm is that quorum intersection,
a crucial requirement to guaranteeing safety in protocols like SCP,
is computationally intractable to determine and maintain,
necessitating centralization if safety is a requirement.
The NP-hardness of determining quorum intersection was previously also proven by Lachowski~\cite{lachowski2019complexity},
together, however,
with practical algorithms for nevertheless determining safety-critical properties of non-trivial FBASs.
We develop new algorithms that incorporate the possibility that some nodes may fail,
enumerating \emph{minimal blocking sets} and \emph{minimal splitting sets}.
We evaluate their performance for different FBAS sizes,
providing insights into the computational limitations that are relevant in practice.
While, based on our analysis approach and its application to specific FBASs,
we can confirm that nodes of higher influence
(\emph{top tier} nodes according to our choice of words)
naturally emerge,
we argue that it is not only the existence and size of such a group that determines \enquote{centralization}
but also the fluidity of that group's membership (which we explicitly investigate).

An alternative analysis methodology and software framework
has recently been presented in \cite{gaul2019mathematical}.
Among other things,
the authors provide algorithms for determining the consequences of specific sets of nodes becoming faulty,
whereas we propose and implement approaches for identifying \emph{all} minimal sets of nodes that need to become faulty for an FBAS to lose safety and liveness guarantees.

\section{Federated Byzantine agreement}%
\label{sec:fba}

In the following, we introduce core concepts of the FBAS paradigm that form our basis for reasoning about specific FBAS instances.
We use terminology based on \cite{lokhava2019stellar_payments}, \cite{mazieres2015stellar},
\cite{lachowski2019complexity} and the Stellar codebase (\emph{stellar-core}).

Our FBAS model is based on the concept of \emph{nodes}.
Whereas nodes usually represent individual machines,
for the purposes of this paper
we typically assume that each node represents a distinct entity or organization.
We will illustrate introduced concepts using examples,
with nodes represented as integers.
For example, $\set{0, 1, 2}$ denotes a set of three distinct nodes.
We will occasionally also use established terms in the context of consensus protocols,
such as \enquote{slot}, \enquote{externalize} and \enquote{faulty}, without formally introducing them.
As an informal and approximate adaptation to the blockchain setting,
a \emph{slot} is a block of a given height,
to \emph{externalize} a \emph{value} is to decide the contents of a block\footnote{
  Consensus protocols for the FBAS setting typically provide \emph{immediate finality},
  in the sense that once the value for a slot
  has been externalized,
  it cannot be reverted or changed.
},
and a \emph{faulty} node is one that violates protocol rules in arbitrary ways,
e.g., assuming the worst-case scenario,
via being under the control of an attacker that also controls all other faulty nodes.

We first introduce the formal foundation of the FBAS paradigm as originally proposed in \cite{mazieres2015stellar}.
Following that, we formally define the \emph{quorum set} configuration format for FBAS nodes that was previously only used in a practical implementation (of the Stellar network software)
but whose convenience for defining specific FBAS instances also benefits the theoretical discussion.
Based on the introduced foundations, we finally derive the necessary properties an FBAS must exhibit in order to enable liveness and safety guarantees.

\subsection{Quorum slice and FBAS}%
\label{sub:fbas}

In an FBAS, each node (respectively its human administrator) individually configures which other
nodes' opinions it should consider when participating in consensus.
Configurations can express individual expectations,
such as \enquote{out of these $n$ nodes, at most $f$ will simultaneously cooperate to attack the system},
and can be used to strategically influence global system parameters.
On a conceptual level, the configuration of an FBAS node consists in the definition of \emph{quorum slices}.

\begin{definition}[FBAS; adapted from \cite{mazieres2015stellar}]
  \label{def:fbas}
  A Federated Byzantine Agreement System (FBAS) is a pair
  $(\V, \Q)$
  comprising a set of nodes $\V$ and a quorum function
  $\Q : \V \to 2^{2^{\V}}$
  specifying \emph{quorum slices} for each node,
  where a node belongs to all of its own quorum slices---i.e.,
  $\forall v \in \V, \forall q \in \Q(v), v \in q$.
\end{definition}

Informally, each quorum slice of a node $v$ describes a set of nodes that,
should they all agree to externalize a value in a given slot,
is sufficient to also cause $v$ to externalize that value.

Clearly, an FBAS cannot be modeled as a regular graph (with FBAS nodes as graph edges) without losing information.
Graph-based analyses as in \cite{kim2019stellar_secure} can therefore result only in heuristic insights.
An FBAS \emph{can} be modeled as a directed hypergraph~\cite{gallo1993hypergraph}.
However, we find the \emph{quorum set} abstraction (presented next) more suitable for subsequent analysis.
In \cref{sec:qsc}, we explore strategies for bootstrapping robust FBASs \emph{from} graphs.

\subsection{Quorum set}%
\label{sub:quorum_set}

While a useful abstraction for formally describing protocols for the FBAS setting,
quorum slices are an unwieldy format for describing concrete FBAS instances.
In Stellar,
the currently most relevant practical deployment of an FBAS,
nodes are configured not via quorum slices but via \emph{quorum sets}~\cite{lokhava2019stellar_payments}.
Each quorum set defines
a set of validator nodes $U \subseteq \V$,
a set of inner quorum sets $\InnerQSets$ and a threshold value $t$.
Intuitively,
this representation enables the encoding of notions such as \enquote{out of these nodes $U$, at least $t$ must agree}
(\emph{satisfying} the quorum set)
or \enquote{the sum of agreeing nodes in $U$ and satisfied inner quorum sets in $\InnerQSets$ must be at least $t$}.

\begin{definition}[quorum set; adapted from Stellar codebase]
  \label{def:quorum_set}
  A quorum set is a recursive tuple
  $(U, \InnerQSets, t) \in \QSetSpace, \: \QSetSpace := 2^{\V} \times 2^\QSetSpace \times \mathbb{Z}^{+}$.
  For quorum sets of the form $D = (U, \InnerQSets, t)$,
  we recursively define that a set of nodes $q \subseteq \V$ \emph{satisfies} $D$ iff
  $(\len{q \cap U} + \len{\set{I \in \InnerQSets : q \text{ satisfies } I}}) \geq t$.
\end{definition}

For example, $(\set{0, 1},\emptyset, 1)$ encodes that agreement is required from either node $0$ or node $1$,
whereas $(\set{0}, \InnerQSets, 1)$ with $\InnerQSets = \set{(\set{1, 2, 3}, \emptyset, 2)}$ encodes that either node $0$ or two out of $\set{1, 2, 3}$ must agree.
Inner quorum sets (members of $\InnerQSets$) are often used for grouping nodes belonging to the same entity (respectively organization),
so that the importance of an entity can be decoupled from the number of nodes it controls.

Quorum sets are useful for defining the quorum slices of a node.
To ease notation, we define the
formalism $\qset(v, D)$ that expresses the set of quorum slices of a node $v \in \V$ based on a quorum set $D \in \QSetSpace$.

\begin{definition}[quorum set $\to$ quorum slices]
  \label{def:qset_function}
  For a node $v \in \V$ and a quorum set $D \in \QSetSpace$,
  $\qset(v, D)$ maps to the set of all valid quorum slices for $v$ that satisfy $D$,
  i.e.,
  $\qset(v, D): \V \times \: \QSetSpace \to 2^{2^{\V}} := \set{q \subseteq \V \mid v \in q \land q \text{ satisfies } D}$.
\end{definition}

Via the $\qset$ notation, quorum sets and quorum slices
become equivalent representations that can be transformed into one another.
A straightforward (but generally not space-efficient) way to express
\emph{any} $k$ quorum slices $\set{q_i \in 2^{\V} \mid i \in [0, k), v \in q_i}$ of a node $v \in \V$
via a quorum set is
$\qset(v, (\emptyset, \InnerQSets, 1))$,
with $\InnerQSets = \set{(q_i, \emptyset, \len{q_i}) \mid i \in [0, k)}$.
Quorum sets are translated to quorum slices (values of $\Q$) by applying the $\qset$ function.
For example (with $\V = \set{0, 1, 2}$):
\begin{align*}
  \Q(0) &= \qset(0, (\set{1, 2},\emptyset, 1)) = \set{\set{0, 1}, \set{0, 2}, \set{0, 1, 2}}\\
  \Q(1) &= \qset(1, (\set{0, 2},\emptyset, 2)) = \set{\set{0, 1, 2}}\\
  \Q(2) &= \qset(2, (\set{0, 1, 2},\emptyset, 2)) = \set{\set{0, 2}, \set{1, 2}, \set{0, 1, 2}}
\end{align*}
In the above example, $\V = \set{0,1,2}$ and their quorum sets (as per $\Q$) form the FBAS $(\V, \Q)$.
As a way to visualize $(\V, \Q)$,
it can heuristically be represented as a graph where the existence of an edge $(v_i, v_j)$
implies that $v_j$ is included in at least one of $v_i$'s quorum slices:

\begin{center}
  \begin{tikzpicture}
    \node[draw, circle] (v0) at (0, 0) {0};
    \node[draw, circle] (v1) at (-1,  -0.7) {1};
    \node[draw, circle] (v2) at (1, -0.7) {2};

    \path [<->] (v0) edge [draw] (v1);
    \path [<->] (v0) edge [draw] (v2);
    \path [<->] (v1) edge [draw] (v2);
  \end{tikzpicture}
\end{center}

\subsection{Preconditions to liveness}%
\label{sub:quorum}

A consensus system is \emph{live} if it can externalize new values\footnote{
  We content ourselves with a weak notion of liveness whereby a system is live as long as it is
  \emph{non-blocking} \cite{garcia2019deconstructing} for one or more non-faulty nodes,
  i.e., as long as an execution path \emph{exists} that allows one or more non-faulty nodes to make progress.
  This can also be called \emph{plausible liveness}.
}.
A consensus system built upon an FBAS is live if the FBAS contains an intact \emph{quorum}---%
a group of FBAS nodes that can
externalize new values by itself.

\begin{definition}[quorum~\cite{mazieres2015stellar}]
  \label{def:quorum}
  A set of nodes $U \subseteq \V$ in FBAS $(\V, \Q)$
  is a quorum iff $U \neq \emptyset$ and
  $U$ contains a quorum slice for each member---i.e.,
  $\forall v \in U \; \exists q \in Q(v): q \subseteq U$.
\end{definition}

This is equivalent to stating that $U$ satisfies the quorum sets of all $v \in U$.
Quorums are therefore determined by the sum of all individual quorum set configurations.
Continuing the previous example with nodes $\V = \set{0, 1, 2}$, we get the quorums $\Quorums = \set{\set{0,2},\set{0,1,2}}$.
We capture part of the semantics behind quorums by defining what it means for a consensus protocol to \emph{honor} a given FBAS%
---namely that whenever values are externalized for a slot, at least one quorum of nodes must eventually externalize values as well.

\begin{definition}[protocol that honors an FBAS]
  \label{def:honors}
  Let $(\V, \Q)$ be an FBAS such that $\V$ contains only non-faulty nodes,
  $P$ a consensus protocol,
  and $N_i \subseteq \V$ the set of all nodes that, following $P$,
  eventually externalize a value for a given slot $i$.
  We say that
  $P$ \emph{honors} $(\V, \Q)$
  iff any nonempty $N_i$ contains a quorum,
  i.e.,
  $\forall i: N_i = \emptyset \lor \exists U \subseteq N$ such that $U$ is a quorum for $(\V, \Q)$.
\end{definition}

We say that $(\V, \Q)$ has \emph{quorum availability despite faulty nodes}
iff there exists a $U \subseteq \V$ that is a quorum in $(\V, \Q)$ and consists of only non-faulty nodes.
Quorum availability despite faulty nodes is a necessary condition to achieving \emph{liveness} in an FBAS,
i.e., ensuring that non-faulty nodes can externalize new values independently of the behavior of faulty nodes
\cite{mazieres2015stellar}.

\begin{theorem}[quorum availability $\Longleftarrow$ liveness]
  \label{theorem:liveness}
  Let $(\V, \Q)$ be an FBAS and $P$ a consensus protocol that honors $(\V, \Q)$.
  If $P$ can provide liveness for $(\V,\Q)$ independently of the behavior of faulty nodes,
  then $(\V, \Q)$ enjoys quorum availability despite faulty nodes.
\end{theorem}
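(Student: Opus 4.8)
The plan is to argue directly from the operational meaning of liveness together with the defining semantics of quorum slices, extracting a non-faulty quorum out of a single successful externalization. Here I read the hypothesis ``honors $\Q$'' as encoding exactly the quorum-slice semantics stated informally before \cref{def:quorum}: a node $v$ externalizes a value only once every node in at least one of its slices $q \in \Q(v)$ does the same. Liveness, in the weak (plausible) sense of the footnote, then guarantees an execution path along which at least one non-faulty node externalizes some value $x$ \emph{without the participation of any faulty node}.

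First I would fix such an execution and let $E \subseteq \V$ be the set of \emph{all} nodes that externalize $x$ along it. By the liveness assumption $E \neq \emptyset$, and since faulty nodes do not participate, every node that externalizes $x$ is non-faulty; hence $E$ is contained in the set of non-faulty nodes. Next I would show that $E$ is a quorum. Take any $v \in E$. Because the protocol honors $\Q$ and $v$ externalized $x$, there is a slice $q \in \Q(v)$ every member of which externalized $x$; but any node that externalizes $x$ lies in $E$ by definition of $E$, so $q \subseteq E$. As $v$ was arbitrary, $E$ satisfies $\forall v \in E\; \exists q \in \Q(v): q \subseteq E$, which together with $E \neq \emptyset$ is precisely \cref{def:quorum}.

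Combining the two observations, $E$ is a quorum consisting solely of non-faulty nodes, which is exactly quorum availability despite faulty nodes, completing the argument. I expect the only delicate point to be the faithful translation of the informal phrases ``honors $\Q$'' and ``liveness'' into the two facts actually used—namely that some non-faulty node externalizes without faulty help, and that every externalizing node must possess an all-externalizing slice; once these are granted, the set-theoretic step is immediate. A contrapositive framing (no non-faulty quorum implies that every candidate set of externalizing nodes contains some node all of whose slices reach outside it, blocking progress) would yield the same conclusion but relies on the identical semantic bridge, so I would keep the direct construction.
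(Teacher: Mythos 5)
Your proof is correct and takes essentially the same route as the paper's: both extract a non-faulty quorum from a single externalization without faulty participation, by observing that a protocol honoring $\Q$ forces every externalizing node to have a fully externalizing slice, which therefore lies inside the set of externalizing nodes. Your explicit definition of $E$ as the set of \emph{all} externalizing nodes just makes the closure step $q \subseteq E$ more careful than the paper's terser phrasing of the same argument.
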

\begin{proof}
  Let $F \subseteq \V$ be the set of all faulty nodes
  and $(\V \setminus F, \Q^\prime)$
  a sub-FBAS that contains all non-faulty nodes,
  with
  $\Q^\prime(v) := \set{q \in \Q(v) \mid q \subseteq \V \setminus F}$
  for $\forall v \in \V \setminus F$.
  $P$ honors $(\V, \Q)$ and can provide liveness independently of the behavior of nodes in $F$,
  therefore there must exist a protocol $P^\prime$
  that can provide liveness while honoring $(\V \setminus F, \Q^\prime)$.
  Based on \cref{def:honors},
  there is therefore at least one $U \subseteq \V \setminus F$
  that is a quorum for $(\V \setminus F, \Q^\prime)$.
  $U$ is, trivially, also a quorum for $(\V, \Q)$.
  \qed
\end{proof}

Given quorum availability despite faulty nodes,
protocols like SCP can provide liveness~\cite{mazieres2015stellar}.
In the case of SCP,
this was previously demonstrated through correctness proofs \cite{garcia2019deconstructing}
as well as formal verification and practical deployment experience \cite{lokhava2019stellar_payments}.
Additional conditions to achieving liveness include the reaction
(via quorum set adaptations, i.e., changes to $\Q$)
to (detectable) timing attacks \cite{lokhava2019stellar_payments}.
We defer to works such as \cite{mazieres2015stellar,losa2019stellar_instantiation,cachin2019asymmetric,cachin2020asymmetric}
for an in-depth exploration of the mechanics and guarantees of consensus protocols for the FBAS setting.

\subsection{Preconditions to safety}%
\label{sub:intersection}

A set of nodes in an FBAS enjoy \emph{safety} if no two of them ever externalize different values for the same slot~\cite{mazieres2015stellar}.
In a blockchain context, a lack of safety guarantees translates into the possibility of forks and double spends.
Protocols that honor an FBAS can only guarantee safety if the FBAS enjoys \emph{quorum intersection}.

\begin{definition}[quorum intersection~\cite{mazieres2015stellar}]
  A given FBAS enjoys quorum intersection iff any two of
  its quorums share a node---i.e., for all quorums
  $U_{1}$ and $U_{2}$, $U_{1} \cap U_{2} \neq \emptyset$.
\end{definition}

For example, the set of quorums $\set{\set{0,2},\set{0,1,2}}$ intersects, whereas introducing an
additional quorum $\set{1,4}$ would break quorum intersection.
In the latter scenario, $\set{0,2}$ and $\set{1,4}$ could induce two new, separated FBASs~\cite{losa2019stellar_instantiation}.
We say that an FBAS enjoys
\emph{quorum intersection despite faulty nodes}
if every two quorums that contain non-faulty nodes intersect in at least one non-faulty node,
even if all faulty nodes change their quorum sets in arbitrary ways or report different quorum sets to different peers.
Formally,
quorum intersection despite faulty nodes is defined via a \emph{delete} operation that
transforms an FBAS based on the assumption that a given set of nodes is acting in the most harmful (to safety) way possible.

\begin{definition}[delete~\cite{mazieres2015stellar}]
  \label{def:delete}
  If $(\V,\Q)$ is an FBAS and $F \subseteq \V$
  a set of nodes, then to \emph{delete} $F$ from $(\V,\Q)$,
  written $(\V,\Q)^F$, means to compute the modified FBAS $(\V \setminus F, \Q^F)$ where
  $\Q^F(v) = \set{q \setminus F, q \in \Q(v)}$.
\end{definition}

If $F \subseteq \V$ is the set of all faulty nodes,
then an FBAS $(\V, \Q)$ enjoys quorums intersection despite faulty nodes iff $(\V, \Q)^F$ enjoys quorum intersection.
If quorum intersection despite faulty nodes is not given, safety cannot be guaranteed (although it can be maintained by chance).

\begin{theorem}[quorum intersection $\Longleftarrow$ guaranteed safety]
  \label{theorem:safety}
  Let $(\V, \Q)$ be an FBAS and
  $P$ a consensus protocol that can provide liveness for any FBAS with quorum availability despite faulty nodes,
  while honoring the respective FBAS.
  Let $P$ furthermore be non-trivial,
  in the sense that externalized values are non-deterministic and depend on user input.
  If $P$ can guarantee safety for all non-faulty nodes in $\V$,
  then $(\V, \Q)$ enjoys quorum intersection despite faulty nodes.
\end{theorem}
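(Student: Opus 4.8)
The plan is to argue by contraposition, reusing the machinery of \cref{theorem:liveness} but in reverse: instead of deriving availability from liveness, I exploit the liveness guarantee built into $P$ to manufacture a safety violation whenever intersection fails. So I assume that $(\V, \Q)$ does \emph{not} enjoy quorum intersection despite faulty nodes, and I exhibit an execution of $P$ in which two non-faulty nodes externalize different values for the same slot, contradicting the hypothesised safety guarantee.

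First I unfold the negated premise. Spelling out the definition, the failure of quorum intersection despite faulty nodes yields two quorums $U_1$ and $U_2$ that share no non-faulty node; concretely, one may take $U_1$ and $U_2$ to be quorums whose only common members (if any) are faulty, so that $U_1 \setminus F$ and $U_2 \setminus F$ are disjoint, where $F$ is the set of faulty nodes. The point I extract is that each $U_i$ is still a quorum in the sense of \cref{def:quorum}---hence self-sufficient, every member owning a slice contained in $U_i$---while the honest portions of $U_1$ and $U_2$ are entirely separate.

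Next I invoke the liveness hypothesis on each side in isolation. Since each $U_i$ witnesses quorum availability despite faulty nodes (it is a quorum, and its non-faulty members suffice to complete it, with the faulty members equivocating as needed), the assumption on $P$ guarantees that $P$ can drive the non-faulty nodes of $U_i$ to externalize a value. I then stage one combined execution that is locally indistinguishable, for each group, from its own isolated run: communication between $U_1 \setminus F$ and $U_2 \setminus F$ is delayed indefinitely (asynchrony), and the faulty nodes feed $U_1$ the messages of a run externalizing a value $a$ while feeding $U_2$ the messages of a run externalizing a different value $b \neq a$. Because the two honest groups are disjoint and each sees a completed quorum, neither can detect the split, so some non-faulty $v_1 \in U_1$ externalizes $a$ and some non-faulty $v_2 \in U_2$ externalizes $b$ for the same slot.

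These conflicting externalizations by non-faulty nodes contradict the premise that $P$ guarantees safety for all non-faulty nodes in $\V$, which completes the contrapositive. I expect the third step to be the main obstacle: $P$ is given only abstractly---all we know is that it honors $\Q$ and delivers liveness under quorum availability---so rather than computing its behaviour I must argue indistinguishability rigorously. The crux is that honoring $\Q$ forces a node to be willing to externalize once it observes agreement across a single satisfied quorum slice; this is what lets the faulty nodes, together with the honest members of $U_i$, present each group with a view the protocol must accept as a legitimate quorum, thereby driving the two disjoint honest groups apart without either group ever observing the conflict.
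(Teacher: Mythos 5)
Your proof is correct and shares the paper's overall skeleton---contraposition, two quorums whose honest parts are disjoint, the liveness hypothesis on $P$ invoked separately on each side to force independent externalization, and conflicting values contradicting the safety guarantee---but it differs in how the faulty nodes are treated, and the difference is substantive. The paper assumes \emph{without loss of generality} that $\V$ contains only non-faulty nodes, so that the negated premise degenerates to two literally disjoint quorums $U_1 \cap U_2 = \emptyset$; it then builds explicit restricted FBASs $(U_i, Q_i)$ with $Q_i(v) = \set{q \in \Q(v) \mid q \subseteq U_i}$ so that the liveness hypothesis applies verbatim, and realizes the split purely by network partition (\enquote{connectivity issues}). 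You instead keep the faulty set $F$ in play, take quorums intersecting only inside $F$, and realize the split by Byzantine equivocation plus indefinite message delay, arguing indistinguishability of the combined execution from two isolated runs. Your route buys something real: the theorem is about quorum intersection \emph{despite faulty nodes}, and the case where $U_1 \cap U_2 \neq \emptyset$ but $U_1 \cap U_2 \subseteq F$ is exactly what the paper's WLOG quietly elides---your equivocation argument handles it head-on, making your version arguably the more faithful proof of the stated claim. What the paper's route buys is a cleaner application of the hypothesis on $P$: liveness is assumed only for FBASs with quorum availability despite faulty nodes, and the restricted FBASs $(U_i, Q_i)$ exhibit that object explicitly, whereas you apply liveness to $U_i$ directly even though $U_i$ itself contains faulty members. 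To tighten your step three, make the counterfactual explicit: consider the run in which the nodes of $F \cap U_i$ behave honestly (so $U_i$ is a quorum of non-faulty nodes and the hypothesis on $P$ applies), then transfer the externalization to the real run by your indistinguishability argument; this is the one place where your sketch leans on a standard move without naming the FBAS to which the liveness assumption is formally applied.
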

\begin{proof}
  Let $F \subseteq \V$ be the set of all faulty nodes and $(\V^\prime, \Q^\prime) := (\V, \Q)^F$.
  If $(\V^\prime, \Q^\prime)$ does not enjoy quorum intersection,
  then there are two quorums $U_1, U_2 \subset \V^\prime$ so that $U_1 \cap U_2 = \emptyset$.
  For $i \in \set{1, 2}$, let $Q_i$ be defined such that $\forall v \in U_i: Q_i(v) := \set{q \in \Q^\prime(v) \mid q \subseteq U_i}$.
  Then both $(U_1, Q_1)$ and $(U_2, Q_2)$ form FBASs with quorum availability.
  As $P$ can provide liveness for any FBAS with quorum availability,
  $(U_1, Q_1)$ and $(U_2, Q_2)$ can externalize values for the same slots
  without any communication taking place between nodes in $U_1$ and nodes in $U_2$.
  As $P$ is non-trivial, the externalized values can differ,
  i.e., safety cannot be guaranteed.
  \qed
\end{proof}

As formally proven by Garc{\'i}a-P{\'e}rez and Gotsman \cite{garcia2018fbqs},
an FBAS that enjoys quorum intersection induces a Byzantine quorum system~\cite{malkhi1998byzantine},
and an FBAS that enjoys quorum intersection despite faulty nodes can induce a dissemination quorum system~\cite{malkhi1998byzantine}.
These results are independent of attempts by faulty nodes to lie about their quorum set configuration \cite{garcia2018fbqs}.
There is strong evidence that protocols like SCP can guarantee safety in any FBAS with quorum
intersection despite faulty nodes~\cite{garcia2019deconstructing,lokhava2019stellar_payments,losa2019stellar_instantiation,cachin2019asymmetric}.

\section{Concepts for further analysis}%
\label{sec:analysis}

In the following, we define new concepts for capturing relevant properties of concrete FBAS instances.
While it is typical in the BFT literature to construct proofs based on \emph{assuming} which sets of nodes \emph{can} fail simultaneously
(i.e., which are the \emph{fail-prone sets}~\cite{malkhi1998byzantine}),
we instead investigate which sets of nodes \emph{have to} fail in order for global liveness and safety guarantees to become void.
This perspective uncovers the liveness and safety \emph{buffers} a given (potentially non-trivial)
quorum system has and is thus highly relevant for the monitoring and evaluation of systems deployed in practice.
While defined based on the FBAS model,
the proposed concepts are readily transferable to more general quorum system formalizations
(e.g., recall that safety-enabling FBASs induce Byzantine quorum systems \cite{garcia2018fbqs}).

For illustration, we will be using the example FBAS defined via \cref{fig:5_node_example}.
An analysis of a slightly larger example FBAS is presented in \cref{app:cascading_example}.
\Cref{app:proofs} contains formal write-ups and proofs of various corollaries and theorems relevant to this section.

\begin{figure}[ht]
  \centering
  \fbox{
    \begin{minipage}{0.3\textwidth}
      $\V = \set{0,1,2,3,4}$\\
      $\Q(0) = \qset(0, (\set{0, 1, 2, 3, 4}, \emptyset, 3))$\\
      $\Q(1) = \qset(1, (\set{0, 1, 2}, \emptyset, 3))$\\
      $\Q(2) = \qset(2, (\set{0, 1, 2}, \emptyset, 3))$\\
      $\Q(3) = \qset(3, (\set{0, 3, 4}, \emptyset, 3))$\\
      $\Q(4) = \qset(4, (\set{0, 3, 4}, \emptyset, 3))$
    \end{minipage}
    \begin{minipage}{0.15\textwidth}
      \begin{center}
        \begin{tikzpicture}
          \node[draw, circle] (v0) at (0, 0) {0};
          \node[draw, circle] (v1) at (-1,  0.5) {1};
          \node[draw, circle] (v2) at (-1, -0.5) {2};
          \node[draw, circle] (v3) at ( 1,  0.5) {3};
          \node[draw, circle] (v4) at ( 1, -0.5) {4};

          \path [<->] (v0) edge [draw] (v1);
          \path [<->] (v0) edge [draw] (v2);
          \path [<->] (v0) edge [draw] (v3);
          \path [<->] (v0) edge [draw] (v4);
          \path [<->] (v1) edge [draw] (v2);
          \path [<->] (v3) edge [draw] (v4);
        \end{tikzpicture}\\
        (heuristic graph representation)
      \end{center}
    \end{minipage}
  }
  \caption{Example FBAS $(\V, \Q)$}
  \label{fig:5_node_example}
\end{figure}

\subsection{Starting point: Minimal quorums}%
\label{sub:minimal_quorums}

As a prerequisite to subsequent analyses, it is helpful to understand which \emph{quorums} (cf. \cref{def:quorum}) exist in an FBAS.
We will be focusing on \emph{minimal quorums},
i.e., quorums $\Min{U} \subseteq \V$ for which there is no proper subset $U \subset \Min{U}$ that is also a quorum.
Informally,
the set of all minimal quorums $\Min{\Quorums}$ carries sufficient information for precisely determining FBAS-wide liveness properties,
while being of significantly smaller size than the set of \emph{all} quorums $\Quorums$.

\begin{definition}[minimal node set]
  \label{def:minimal_set}
  Within the set of node sets $\NodeSetSet \subseteq 2^{\V}$,
  a member set $\Min{N} \in \NodeSetSet$ is minimal iff
  none of its proper subsets is included in $\NodeSetSet$---i.e.,
  $\forall N \in \NodeSetSet, N \not\subset \Min{N}$.
\end{definition}

The FBAS depicted in \cref{fig:5_node_example} has the quorums $\Quorums = \set{\set{0,1,2}, \set{0,3,4}, \set{0,1,2,3,4}}$
and consequently the minimal quorums $\Min{\Quorums} = \set{\set{0,1,2}, \set{0,3,4}}$.

The notion of minimal quorums is helpful, among other things,
for efficiently determining whether an FBAS enjoys quorum intersection~\cite{lachowski2019complexity}:
it can be shown that an FBAS enjoys quorum intersection iff every two of its minimal quorums intersect (\cref{cor:min_quorum_intersection}).

\subsection{Minimal blocking sets}%
\label{sub:liveness}

As per \cref{theorem:liveness},
an FBAS $(\V, \Q)$ cannot enjoy liveness if it doesn't contain at least one non-faulty quorum.
Considering the state of the art in consensus protocols for the FBAS setting and their formal verification (s.a. \cref{sub:quorum}),
quorum availability despite faulty nodes is furthermore the \emph{only} precondition to achieving liveness that depends on $(\V, \Q)$
and arguably the most difficult to satisfy in a practical deployment.
However, while quorum availability can easily be checked based on $\Q$,
faulty nodes are usually not readily identifiable as such in practice.
We therefore propose,
as a means to grasping liveness risks,
to look at sets of nodes that, if faulty, can undermine quorum availability.

\begin{definition}[blocking set]
  \label{def:blocking_set}
  Let $\Quorums \subseteq 2^{\V}$ be the set of all quorums of the FBAS $(\V, \Q)$.
  We denote the set $B \subseteq \V$ as \emph{blocking} iff it intersects every quorum of the FBAS---i.e.,
  $\forall U \in \Quorums, B \cap U \neq \emptyset$
\end{definition}

For example: $\set{0}$ and $\set{1,3}$ are both blocking sets for
$\Quorums = \set{\set{0,1,2}, \set{0,3,4}, \set{0,1,2,3,4}}$.

\begin{corollary}[blocking sets and liveness]
  Control over any blocking set $B$ is sufficient for compromising the liveness of an FBAS $(\V, \Q)$.
\end{corollary}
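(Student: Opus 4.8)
The plan is to reduce the claim to a direct application of the contrapositive of \cref{theorem:liveness}. The crucial observation is that the set of quorums $\Quorums$ is fully determined by $\Q$ (per \cref{def:quorum}) and does \emph{not} change when some nodes become faulty; what changes is only the classification of individual nodes as faulty or non-faulty. I would therefore begin by fixing an arbitrary blocking set $B$ and supposing that an adversary takes control of exactly the nodes in $B$, rendering all of them faulty.

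Next I would unfold the definition of \emph{blocking} (\cref{def:blocking_set}): since $B$ intersects every quorum, for every $U \in \Quorums$ we have $B \cap U \neq \emptyset$, so every quorum contains at least one faulty node. Consequently there is no quorum consisting solely of non-faulty nodes, which is precisely the negation of \emph{quorum availability despite faulty nodes} as introduced in \cref{sub:quorum}. Finally I would appeal to \cref{theorem:liveness} in contrapositive form: that theorem states that liveness (under any protocol honoring $\Q$) implies quorum availability despite faulty nodes, so the failure of the latter precludes the former. Hence, once $B$ is faulty, $(\V, \Q)$ can no longer enjoy liveness, which establishes that control over $B$ suffices to compromise liveness.

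I expect no substantial obstacle, as the argument is essentially a chaining of the relevant definitions with one already-proven theorem. The only point that warrants explicit care is to stress that $\Quorums$ is computed from $\Q$ independently of which nodes happen to be faulty, so that a set that is blocking before the adversary acts remains blocking afterwards; with this noted, the implication is immediate and no case analysis or additional construction is required.
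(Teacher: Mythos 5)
Your proposal is correct and follows essentially the same route as the paper's proof: unfold \cref{def:blocking_set} to conclude that no quorum of exclusively non-faulty nodes exists once $B$ is faulty, then invoke \cref{theorem:liveness} in contrapositive form to rule out liveness. Your explicit remark that $\Quorums$ is determined by $\Q$ alone (and thus unaffected by which nodes are faulty) is a useful clarification the paper leaves implicit, but it does not change the argument.
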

\begin{proof}
  As $B$ intersects all quorums of the FBAS, there is no quorum that can be formed without cooperation by $B$.
  Without at least one non-faulty quorum, liveness is not possible as per \cref{theorem:liveness}.
  \qed
\end{proof}

Notably, blocking sets can also block liveness \emph{selectively},
enabling \emph{censorship}.
As nodes from the blocking set are present in every quorum,
consensus will never be reached on any value that the blocking set opposes to.
For example, in the context of Stellar,
the blocking set could block the ratification of transactions involving specific accounts.
We chose the term blocking in analogy to the \emph{v-blocking sets} introduced in \cite{mazieres2015stellar}.
As an important distinction,
we use the term \emph{blocking set} to refer to a property of the whole FBAS $(\V, \Q)$,
as opposed to a property of an individual node $v \in \V$.

In the above example, $\set{0}$ and $\set{1,3}$ are not only blocking sets with respect to
$\Quorums$,
they are \emph{minimal blocking sets}, i.e., none of their proper subsets is a blocking set\footnote{
  For completeness, the set of all minimal blocking sets w.r.t. $\Quorums$ is
  $\Min{\BlockSetSet} = \set{\set{0}, \set{1,3}, \set{1,4}, \set{2,3}, \set{2,4}}$.
}.
In essence, minimal blocking sets describe minimal threat (respectively, fail) scenarios w.r.t. liveness.

\subsection{Minimal splitting sets}%
\label{sub:safety}

As per \cref{theorem:safety},
an FBAS can only be considered safe (as one coherent system) as long as it enjoys quorum intersection despite faulty nodes,
i.e., as long as each two of its quorums intersect even after all faulty nodes have been deleted (as per \cref{def:delete}).
For practical purposes,
quorum intersection despite faulty nodes is furthermore a \emph{sufficient} condition for achieving safety in an FBAS,
considering protocols like SCP and the correctness proofs surrounding them (s.a. \cref{sub:intersection}).
Hence, for assessing the risk to safety,
it is interesting to identify sets of nodes that can cause an FBAS to effectively lose quorum intersection.
We call such a set of nodes a \emph{splitting set}, as it can, if faulty, cause at least two quorums to diverge, splitting the FBAS.

\begin{definition}[splitting set]
  \label{def:splitting_set}
  We denote the set $S \subseteq \V$ a splitting set
  iff $(\V, \Q)^S$ lacks quorum intersection---i.e.,
  there are distinct quorums $U_{1}$ and $U_{2}$ of $(\V, \Q)^S$ so that $U_{1} \cap U_{2} = \emptyset$.
\end{definition}

In the above example with $\Min{\Quorums} = \set{\set{0,1,2},\set{0,3,4}}$,
$\set{0}$ is already a splitting set,
as $(\V, \Q)^{\set{0}}$ induces the two non-intersecting quorums $\set{1,2}$ and $\set{3,4}$.
Intuitively, ${\set{0}}$ is a splitting set of $(\V, \Q)$ because it forms the intersection of the quorums $\set{0,1,2}$ and $\set{0,3,4}$.

The existence of a faulty splitting set violates quorum intersection despite faulty nodes
and therefore, as per \cref{theorem:safety}, threatens safety.
Informally,
the members of
a splitting set can perform two types of actions to compromise safety in practice
(s.a. \cref{theorem:minimal_splitting_set_nodes_are_quorum_expanders_or_top_tier}).
On the one hand,
they can change their quorum configurations (or lie about them) to cause existing quorums to shrink
or new quorums to emerge,
both with the goal of reducing the overlap between quorums.
On the other hand,
whenever the intersection of two
(minimal)
quorums is comprised entirely of faulty nodes,
these nodes can agree to different statements in each quorum,
causing the quorums to externalize conflicting values and in this way diverge.

As with blocking sets, we are especially interested in finding the \emph{minimal splitting sets}
$\Min{\SplitSetSet} \subset 2^{\V}$
of an FBAS%
\footnote{
  In the above example, $\set{0}$ is the only minimal splitting set w.r.t. $\Quorums$, i.e., the set of all minimal splitting sets is $\Min{\SplitSetSet} = \set{\set{0}}$.
}
$(\V, \Q)$.
Minimal splitting sets describe minimal threat scenarios w.r.t. safety.

\subsection{Top tier}%
\label{sub:top_tier}

For narrowing down notions of \enquote{centralization} with respect to FBASs, we propose the concept of a \emph{top tier}.
Informally,
the top tier is the set of nodes in the FBAS that is exclusively relevant when determining minimal blocking sets
and hence the liveness buffers
of an FBAS.

\begin{definition}[top tier]
  \label{def:top_tier}
  The top tier of an FBAS $(\V, \Q)$ is the set of all nodes that are contained in one or more minimal quorums---i.e.,
  if $\Min{\Quorums} \subseteq 2^{\V}$ is the set of all minimal quorums of the FBAS,
  $T=\bigcup{\Min{\Quorums}}$ is its top tier.
\end{definition}

In the above example, it in fact holds that $T = \set{0,1,2,3,4} = \V$.

It can be shown that each minimal blocking set consists exclusively of top tier nodes (\cref{cor:min_block_from_top_tier}),
and each top tier node is included in at least one minimal blocking set (\cref{theorem:top_tier_from_min_block}).
The FBAS $(\V, \Q)$ with top tier $T$ has therefore the same properties w.r.t. global liveness as the FBAS induced by $T$,
i.e., the FBAS $(T, \Q^\prime)$ with $\Q^\prime(v) := \set{q \cap T \mid q \in \Q(v)}$.

This observation has direct implications for the computational complexity of FBAS analysis
(further discussed in \cref{sec:algorithms}),
\emph{and} for the performance of FBAS-based consensus protocols.
A consensus round in SCP
(the so far only production-ready protocol for the FBAS setting, to the best of our knowledge)
can demonstrably be completed in $O(\len{T}^2)$ messages.
While classical consensus protocols with quadratic message complexity (such as PBFT~\cite{castro1999practical})
are notorious for becoming unusable in larger validator groups,
several improved protocols have recently emerged that target the blockchain use case and scenarios with 100 and more validators
\cite{yin2019hotstuff_podc,stathakopoulou2019mirbft}.
As a possible avenue for future exploration---%
for FBASs with a \emph{symmetric top tier},
existing permissioned protocols could be adapted without much modification.

\begin{definition}[symmetric top tier]
  \label{def:symmetric_top_tier}
  The top tier $T$ of an FBAS $(\V, \Q)$ is a symmetric top tier iff all top tier nodes have identical quorum sets---i.e.,
  $\exists D \in \QSetSpace, \forall v \in T: \Q(v) = \qset(v, D)$.
\end{definition}

Symmetric top tiers are also significantly more amenable to analysis.
For example, in FBASs with a symmetric top tier $T$ and a non-nested top tier quorum set $(T, \emptyset, t)$,
it holds that any minimal blocking set has cardinality $\len{\Min{B}} = \len{T}-t+1$
(\cref{theorem:symmetric_top_tier_blocking_cardinalities})
and any
minimal splitting set that can cause two top tier nodes to diverge from each other
has cardinality $\len{\Min{S}} = 2t-\len{T}$ (\cref{theorem:symmetric_top_tier_splitting_cardinalities}).

\section{Analysis algorithms}%
\label{sec:algorithms}

In the following,
we propose algorithms for performing the analyses introduced in \cref{sec:analysis}.
We describe them as pseudocode that necessarily abstracts away some implementation details and optimizations.
As a companion to this paper, we release a well-tested implementation of the presented algorithms as open source
(\texttt{fbas\_analyzer}\footnote{
  \url{https://github.com/wiberlin/fbas_analyzer};
  Our Rust-based library has been integrated into \url{https://stellarbeat.io/}
  (a popular monitoring service for the Stellar network)
  and supports in-browser usage---cf. our interactive analysis website at
  \url{https://trudi.weizenbaum-institut.de/stellar_analysis/}.
}).
After outlining algorithms for enumerating minimal quorums (foundation for further analyses),
determining quorum intersection (necessary condition for safety),
enumerating minimal blocking sets (liveness \enquote{buffers}),
enumerating minimal splitting sets (safety \enquote{buffers}),
and efficiently dealing with symmetric top tiers,
the section concludes with a short empirical study on analysis scalability.

\subsection{Minimal quorums}%
\label{sub:minimal_quorums_algo}

\Cref{algo:find_minimal_quorums} describes a branch-and-bound algorithm for finding all minimal quorums.
It is based on a quorum enumeration procedure originally described in \cite{lachowski2019complexity}.
Previous algorithms did not rigorously filter out non-minimal quorums,
which we realize through \texttt{is\_minimal\_quorum}.
The set of all minimal quorums of an FBAS defines its top tier (cf. \cref{sub:top_tier}) and can be used for determining whether the FBAS enjoys quorum intersection.

\begin{algorithm}
  \SetKwProg{Fn}{Function}{:}{}
  \SetKwFunction{FindMinimalQuorums}{find\_minimal\_quorums}
  \SetKwFunction{FindMinimalQuorumsStep}{fmq\_step}
  \SetKwFunction{IsQuorum}{is\_quorum}
  \SetKwFunction{IsMinimalQuorum}{is\_minimal\_for\_quorum}
  \SetKwFunction{IsSatisfiable}{is\_satisfiable}
  \SetKwFunction{ContainsQuorum}{contains\_quorum}
  \Fn{\FindMinimalQuorums{$(\V, \Q)$}}{
    \KwData{An FBAS $(\V, \Q)$.}
    \KwResult{$\Min{\Quorums}$, the set of all minimal quorums of $(\V, \Q)$.}
    \BlankLine
    $V \leftarrow \V$ sorted by, e.g., PageRank~\cite{page1999pagerank} (cf. \cite{lachowski2019complexity})\;
    \KwRet{\FindMinimalQuorumsStep{$\emptyset$, $V$, $\Q$}}\;
  }
  \BlankLine
  \Fn{\FindMinimalQuorumsStep{$U$, $V$, $\Q$}}{
    \uIf{\IsQuorum{$U$, $\Q$}}{
      \uIf{\IsMinimalQuorum{$U$, $\Q$}}{
        \KwRet{$\set{U}$}\;
      }
      \lElse{\KwRet{$\emptyset$}}
    }
    \uElseIf{\IsSatisfiable{$U$, $V$, $\Q$}}{
      $v \leftarrow$ next in $V$\;
      \KwRet{\FindMinimalQuorumsStep{$U \cup \set{v}$, $V \setminus \set{v}$, $\Q$}
        \ $\cup$ \
      \FindMinimalQuorumsStep{$U$, $V \setminus \set{v}$, $\Q$}}\;
    }
    \lElse{\KwRet{$\emptyset$}}
  }
  \BlankLine
  \Fn{\IsQuorum{$U$, $\Q$}}{
    \KwRet{$\forall v \in U \; \exists q \in \Q(v): q \subseteq U$}\;
  }
  \BlankLine
  \Fn{\IsSatisfiable{$U$, $V$, $\Q$}}{
    \KwRet{$\forall v \in U \; \exists q \in \Q(v): q \subseteq U \cup V$}\;
  }
  \BlankLine
  \Fn{\IsMinimalQuorum{$U$, $\Q$}}{
    \For{$v \in U$} {
      \uIf{\ContainsQuorum{$U \setminus \set{v}$, $\Q$}}{
        \KwRet{false}\;
      }
    }
    \KwRet{true}\;
  }
  \BlankLine
  \Fn{\ContainsQuorum{$U$, $\Q$}}{
    \tcp{remove non-satisfiable nodes}
    \While{$\exists v \in U \; \forall q \in \Q(v): q \not\subseteq U$} {
      $U \leftarrow \set{v \in U \mid \exists q \in \Q(v): q \subseteq U}$\;
    }
    \KwRet{$U \neq \emptyset$}\;
  }
  \caption{Find minimal quorums}
  \label{algo:find_minimal_quorums}
\end{algorithm}

The keystone of the algorithm is the function \texttt{fmq\_step} that takes a current quorum \emph{candidate} $U$,
a sorted list of \emph{yet-to-be-considered nodes} $V$ and a reference to $\Q$ for mapping nodes to their quorum sets.
The algorithm implements a classical branching pattern:
at each invocation of \texttt{fmq\_step} in which $U$ is not already a quorum,
the next node in $V$ is taken out and, in one branch, added to $U$, and, in the other, not.
Hopeless branches are identified early using the $\texttt{is\_satisfiable}$ function.

As proposed in \cite{lachowski2019complexity},
we initially sort $V$ using a heuristic such as PageRank~\cite{page1999pagerank}
which can improve the algorithm's performance in practice.
Another important optimization from \cite{lachowski2019complexity},
that we leave out in our pseudocode for greater clarity,
is the partitioning of $\V$ into \emph{strongly connected components}\footnote{
  Based on the heuristic representation of the FBAS as a directed graph.
} so that \texttt{find\_minimal\_quorums} must be applied only to (often significantly smaller) subsets of $\V$.
Tarjan~\cite{tarjan1972depth} gives an algorithm for performing this preprocessing step in linear time.

As noted in other works (e.g., \cite{lachowski2019complexity,bracciali2021decentralization}),
determining quorum intersection, and hence also enumerating all minimal quorums, is NP-hard.
Consequently, our algorithm has exponential time complexity.
For an FBAS with $n = \len{\V}$ nodes and a top tier of size $\tts = \len{T}$ we find all $k \leq \binom{\tts}{\ceil{\frac{\tts}{2}}}$ minimal quorums in $O(2^n)$.
Note that in practice the number of de-facto considered nodes $n$ is greatly reduced through polynomial-time preprocessing steps such as strongly-connected-component analysis and heuristics-based sorting,
yielding actual running times that are close to the $O(2^\tts)$ bound.

\subsection{Quorum intersection}%
\label{sub:has_quorum_intersection_algo}

Quorum intersection is a central property for being able to guarantee safety in an FBAS (cf. \cref{sub:safety}).
Quorum intersection can be determined by checking the pairwise intersection of all minimal quorums (\cref{cor:min_quorum_intersection}).
This straightforward approach, that was also proposed in \cite{lachowski2019complexity},
is embodied in \cref{algo:has_quorum_intersection}.

\begin{algorithm}
  \SetKwProg{Fn}{Function}{:}{}
  \SetKwFunction{HasQuorumIntersection}{has\_quorum\_intersection}
  \SetKwFunction{FindMinimalQuorums}{find\_minimal\_quorums}
  \Fn{\HasQuorumIntersection{$(\V, \Q)$}}{
    \KwData{An FBAS $(\V, \Q)$.}
    \KwResult{\emph{true} if the FBAS enjoys quorum intersection, \emph{false} else.}
    \BlankLine
    $\Min{\Quorums} \leftarrow $ \FindMinimalQuorums{$(\V, \Q)$}\;
    \KwRet{$\forall \Min{U}_i, \Min{U}_j \in \Min{\Quorums}: \Min{U}_i \cap \Min{U}_j \neq \emptyset$}\;
  }
  \caption{Checking for quorum intersection via approach from \cite{lachowski2019complexity}.}
  \label{algo:has_quorum_intersection}
\end{algorithm}

In this paper, we propose an additional, alternative algorithm (\cref{algo:alt_has_quorum_intersection}),
that doesn't check for pairwise intersections but instead checks whether the \emph{complement sets} of found quorums contain quorums themselves.
If this is never the case, the FBAS enjoys quorum intersection.
This approach for checking for quorum intersection has the benefit that only a constant number of node sets must be held in memory at the same time, as opposed to all minimal quorum sets as in \cref{algo:has_quorum_intersection}.
The space complexity of the check is therefore reduced from exponential to linear.

\begin{algorithm}
  \SetKwProg{Fn}{Function}{:}{}
  \SetKwFunction{HasQuorumIntersection}{has\_quorum\_intersection}
  \SetKwFunction{FindMinimalQuorums}{find\_minimal\_quorums}
  \SetKwFunction{ContainsQuorum}{contains\_quorum}
  \Fn{\HasQuorumIntersection{$(\V, \Q)$}}{
    \KwData{An FBAS $(\V, \Q)$.}
    \KwResult{\emph{true} if the FBAS enjoys quorum intersection, \emph{false} else.}
    \BlankLine
    \For{$\Min{U} \in$ \FindMinimalQuorums{$(\V, \Q)$}}{
      \uIf{\ContainsQuorum{$\V \setminus \Min{U}$}}{
        \KwRet{false}\;
      }
    }
    \KwRet{true}\;
  }
  \caption{Checking for quorum intersection via alternative approach with linear space complexity.}
  \label{algo:alt_has_quorum_intersection}
\end{algorithm}

Our implementation of \cref{algo:alt_has_quorum_intersection} is also empirically faster for many FBASs,
probably because \texttt{contains\_quorum} scales better than iterating once over all minimal quorums,
and because less data must be written to memory.
For both algorithms, we leave out optimization details
such as leveraging the fact that quorum intersection is guaranteed to hold if all minimal quorums $\Min{U} \in \Min{\Quorums}$ have cardinality greater than $\frac{\len{\bigcup \Min{\Quorums}}}{2}$.
In \cref{algo:alt_has_quorum_intersection},
for example,
it suffices to check only minimal quorums with fewer than $\frac{\len{\bigcup \Min{\Quorums}}}{2}$ members.

\subsection{Minimal blocking sets}%
\label{sub:minimal_blocking_sets_algo}

\Cref{algo:find_minimal_blocking_sets} presents our algorithm for enumerating all minimal blocking sets based on a branch-and-bound strategy.
The check whether a given \emph{candidate set} $B$ is blocking is performed by checking whether the FBAS contains any quorums after $B$ is removed from the node population.
If a blocking set can still be formed from $B$ and the yet-to-be-considered nodes $V$ (this is the pruning rule),
the enumeration continues, branching via either adding the next node in $V$ to the candidate set or discarding it altogether.
The order in which nodes are visited can be tuned using a suitable heuristic---we sort nodes using PageRank~\cite{page1999pagerank}
(as for finding minimal quorums) in the example pseudocode and our current implementation.
Like for \cref{algo:find_minimal_quorums},
the complexity of \cref{algo:find_minimal_blocking_sets} is in $O(2^n)$
(for an FBAS with $n$ nodes)
with a likely practical average case complexity of $O(2^\tts)$
($\tts$ being the size of the top tier).

\begin{algorithm}
  \SetKwProg{Fn}{Function}{:}{}
  \SetKwFunction{FindMinimalBlockingSets}{find\_minimal\_blocking\_sets}
  \SetKwFunction{FindMinimalBlockingSetsStep}{fmb\_step}
  \SetKwFunction{IsMinimalBlocking}{is\_minimal\_for\_blocking}
  \SetKwFunction{IsBlocking}{is\_blocking}
  \SetKwFunction{ContainsQuorum}{contains\_quorum}
  \Fn{\FindMinimalBlockingSets{$(\V, \Q)$}}{
    \KwData{An FBAS $(\V, \Q)$.}
    \KwResult{$\Min{\BlockSetSet}$, the set of all minimal blocking sets of $(\V, \Q)$.}
    \BlankLine
    $V \leftarrow \V$ sorted by, e.g., PageRank~\cite{page1999pagerank}\;
    \KwRet{\FindMinimalBlockingSetsStep{$\emptyset$, $V$, $\Q$}}\;
  }
  \BlankLine
  \Fn{\FindMinimalBlockingSetsStep{$B$, $V$, $\Q$}}{
    \uIf{\IsBlocking{$B$, $V$, $\Q$}}{
      \uIf{\IsMinimalBlocking{$B$, $V$, $\Q$}}{
        \KwRet{$\set{B}$}\;
      }
      \lElse{\KwRet{$\emptyset$}}
    }
    \uElseIf{\IsBlocking{$B \cup V$, $V$, $\Q$}}{
      $v \leftarrow$ next in $V$\;
      \KwRet{\FindMinimalBlockingSetsStep{$B \cup \set{v}$, $V \setminus \set{v}$, $\Q$}
        \ $\cup$ \
      \FindMinimalBlockingSetsStep{$B$, $V \setminus \set{v}$, $\Q$}}\;
    }
    \lElse{\KwRet{$\emptyset$}}
  }
  \BlankLine
  \Fn{\IsBlocking{$B$, $V$, $\Q$}}{
    \KwRet{$\lnot\text{\ContainsQuorum{$V \setminus B$, $\Q$}}$}\;
  }
  \BlankLine
  \Fn{\IsMinimalBlocking{$B$, $V$, $\Q$}}{
    \For{$v \in B$}{
      \uIf{\IsBlocking{$B \setminus \set{v}$, $V$, $\Q$}}{
        \KwRet{false}\;
      }
    }
    \KwRet{true}\;
  }
  \caption{Find minimal blocking sets}
  \label{algo:find_minimal_blocking_sets}
\end{algorithm}

\subsection{Minimal splitting sets}%
\label{sub:minimal_splitting_sets_algo}

\Cref{algo:find_minimal_splitting_sets} presents our algorithm for enumerating all minimal splitting sets.
We again perform a branch-and-bound search.
The final condition for accepting a \emph{candidate set} $S$ is whether deleting it (cf. \cref{def:delete}) from the FBAS causes the FBAS to lose quorum intersection.

This check is significantly more expensive than the corresponding checks
in \cref{algo:find_minimal_quorums} and \cref{algo:find_minimal_blocking_sets}.
Additionally, unlike the previously presented algorithms,
\cref{algo:find_minimal_splitting_sets} also needs
to consider non-top tier nodes as candidates.
We incorporate the observation (from \cref{theorem:minimal_splitting_set_nodes_are_quorum_expanders_or_top_tier})
that a node can only be part of a minimal splitting set if it is part of a minimal quorum
(only then can it be part of an intersection of minimal quorums)
or if a change of its quorum set can potentially cause new, smaller quorums to emerge.
Consequently,
we consider as candidates
all top tier nodes and all nodes that are \emph{quorum expanders}:
nodes that are part of a quorum slice of another node that is a not a quorum slice for themselves
(formal definition in \cref{def:quorum_expanders}).
Informally, by not sharing a quorum slice with a node they affect,
quorum expanders may force quorums to expand beyond this quorum slice.
By changing their quorum set,
quorum expanders could reverse this effect, leading to smaller quorums and, accordingly,
an increased risk to quorum intersection.

The \texttt{has\_potential} function embodies
an explicit pruning condition for the branch-and-bound search.
Here, we check whether a change in the FBAS's minimal quorums is possible if all outstanding candidate nodes $V$ are joined with the current candidate set $S$.
As a heuristic to avoid actually calculating minimal quorums,
we check whether the quorum-containing strongly connected components of the FBAS change after deleting $V$ in addition to $S$.

For improving readability and comprehension,
we leave out various details and smaller optimizations from our pseudocode listing for \cref{algo:find_minimal_splitting_sets}.
Among other things,
we don't include our full algorithms for enumerating \texttt{quorum\_expanders}
and deliberately ignore opportunities for caching and reusing the results of costly operations.

\begin{algorithm}
  \SetKwProg{Fn}{Function}{:}{}
  \SetKwFunction{HasQuorumIntersection}{has\_quorum\_intersection}
  \SetKwFunction{FindMinimalSplittingSets}{find\_minimal\_splitting\_sets}
  \SetKwFunction{FindSplittingSetsStep}{fs\_step}
  \SetKwFunction{QuorumExpanders}{quorum\_expanders}
  \SetKwFunction{ConsensusClusters}{quorum\_clusters}
  \SetKwFunction{HasPotential}{has\_potential}
  \SetKwFunction{ReduceToMinimalSets}{reduce\_to\_minimal\_sets}
  \Fn{\FindMinimalSplittingSets{$(\V, \Q)$}}{
    \KwData{An FBAS $(\V, \Q)$.}
    \KwResult{$\Min{\SplitSetSet}$, the set of all minimal splitting sets of $(\V, \Q)$.}
    \BlankLine
    $V \leftarrow \bigcup{\text{\FindMinimalQuorums{$(\V, \Q)$}}}$\;
    $V \leftarrow V \cup \QuorumExpanders{$(\V, \Q)$}$\;
    $V \leftarrow V$ sorted by, e.g., number of affected nodes\;
    $A \leftarrow \V$\;
    $\SplitSetSet \leftarrow$ \FindSplittingSetsStep{$\emptyset$, $V$, $A$, $(\V, \Q)$}\;
    \KwRet{\ReduceToMinimalSets{$\SplitSetSet$}}\;
  }
  \BlankLine
  \Fn{\FindSplittingSetsStep{$S$, $V$, $(\V, \Q)$}}{
    \uIf{$\lnot\HasQuorumIntersection{$(\V, \Q)^S$}$}{
      \KwRet{$\set{S}$}\;
    }
    \uElseIf{\HasPotential{$S$, $V$, $(\V, \Q)$}}{
      $v \leftarrow$ next in $V$\;
      \KwRet{\\
        \ \FindSplittingSetsStep{$S \cup \set{v}$, $V \setminus \set{v}$, $(\V, \Q)$}
        $\cup$\\
        \ \FindSplittingSetsStep{$S$, $V \setminus \set{v}$, $(\V, \Q)$}
      }\;
    }
    \lElse{\KwRet{$\emptyset$}}
  }
  \BlankLine
  \Fn{\QuorumExpanders{$(\V, \Q)$}}{
    \KwRet{$\set{v \in \V \mid \exists v^\prime \in \V, q^\prime \in \Q(v^\prime):
    \qquad\qquad\qquad %
    v \in q^\prime \land (\forall q \in \Q(v) : q \not\subseteq q^\prime) }$}\;
  }
  \BlankLine
  \Fn{\HasPotential{$S$, $V$, $(\V, \Q)$}}{
    \KwRet{$\ConsensusClusters{$(\V, \Q)^{S \cup V}$} \neq \ConsensusClusters{$(\V, \Q)^{S}$}$}\;
  }
  \BlankLine
  \Fn{\ConsensusClusters{$(\V, \Q)$}}{
    $\NodeSetSet \leftarrow$ strongly connected components of $(\V, \Q)$\;
    \KwRet{$\set{N \in \NodeSetSet \mid \ContainsQuorum{N}}$}\;
  }
  \BlankLine
  \Fn{\ReduceToMinimalSets{$\SplitSetSet$}}{
    \KwRet{$\set{\Min{S} \in \SplitSetSet \mid \forall S \in \SplitSetSet: S \not\subset \Min{S}}$}\;
  }
  \caption{Find minimal splitting sets}
  \label{algo:find_minimal_splitting_sets}
\end{algorithm}

The asymptotic complexity of \cref{algo:find_minimal_splitting_sets} remains in $O(2^n)$, respectively $O(2^{\len{T \cup X}})$
where $T$ is the top tier and $X$ the set of all quorum expanders.
However, due to the costly acceptance check for splitting sets
and the larger number of nodes that need to be considered,
the algorithm is significantly slower than \cref{algo:find_minimal_quorums} and \cref{algo:find_minimal_blocking_sets} in practice.

\subsection{Symmetric clusters}%
\label{sub:symmetry_algos}

As a generalization of symmetric top tiers (\cref{def:symmetric_top_tier}),
we define \emph{symmetric clusters} of an FBAS $(\V, \Q)$ as groups of nodes
$Y \subseteq \V$ such that $\exists D \in \QSetSpace, \forall v \in Y: \Q(v) = \qset(v, D)$ and $\bigcup{\bigcup{\set{\Q(v), v \in Y}}} = Y$.
If an FBAS has one symmetric cluster $Y$ and $\V \setminus Y$ does not contain a quorum,
$Y$ is the symmetric top tier of $(\V, \Q)$\footnote{
  If an FBAS has $l > 1$ symmetric clusters or $\V \setminus Y$ does contain a quorum, $(\V, \Q)$ does not enjoy quorum intersection.
}.

Symmetric clusters can be found in polynomial time,
by grouping nodes with identical quorum set configurations (values for $\Q$) and checking the above condition for each thus formed candidate set.

Symmetric clusters can be analyzed significantly more efficiently.
For example, an FBAS with a non-nested symmetric top tier is isomorphic to a classical, threshold-based quorum system
(s.a. \cref{theorem:symmetric_top_tier_blocking_cardinalities,theorem:symmetric_top_tier_splitting_cardinalities}).
For symmetric clusters formed around a nested quorum set,
minimal quorums and minimal blocking sets can be enumerated without the overhead of checking candidate sets,
by recursively listing combinations and forming their Cartesian product.
If the interest is to find only such splitting sets that
can cause nodes within the symmetric cluster to diverge,
then the same is true for minimal splitting sets.

\subsection{Analysis performance}%
\label{sub:performance_eval}

Our analysis approach requires the enumeration of minimal quorums, minimal blocking sets and minimal splitting sets---which in all three cases is an NP-hard problem.
It is unclear, however, what this means for the practical limitations of thoroughly determining the safety and liveness buffers of an FBAS.
Practical limitations are difficult to conclusively determine as the real-life performance of analyses depends heavily on the topology of analyzed FBASs and the implementation of the algorithms.

In the following, we present a short exploratory study into the scalability of our own implementation.
We construct synthetic FBASs of increasing size that consist of only a top tier.
In the first series of presented experiments (\cref{fig:performance_ideal}),
we construct FBASs $(\V, \Q)$ resembling classical $3f+1$ quorum systems:
\begin{align*}
  \forall v \in \V: \Q(v) = \qset(v, (\V, \emptyset, \optithresh{\V}))
\end{align*}
In a second series of experiments (\cref{fig:performance_stellarlike}),
we approximate the structure of the Stellar network's top tier where each organization is represented by (usually) 3 physical nodes arranged in crash failure-tolerating $2f+1$ inner quorum sets:
\begin{align*}
  &\V = \set{v_0, v_1, ... v_{n-1}}, n = 3m\\
  &\InnerQSets = \set{(\set{v_{3i}, v_{3i+1}, v_{3i+2}}, \emptyset, 2) \mid i \in [0, m)}\\
  &\forall v \in \V: \Q(v) = \qset(v, (\emptyset, \InnerQSets, \optithreshraw{m}))
\end{align*}

We enumerate all minimal quorums, minimal blocking sets and minimal splitting sets of thus generated FBASs and record the time to completion of each of these operations.
All analyses were single-threaded and performed on regular server-class hardware.
We explicitly deactivated all optimizations based on detecting and exploiting symmetric clusters,
so that the results of this study reflect the performance of the more expensive
\cref{algo:find_minimal_quorums,algo:find_minimal_blocking_sets,algo:find_minimal_splitting_sets}.

\begin{figure}[htpb]
  \centering
  \includegraphics[width=\linewidth]{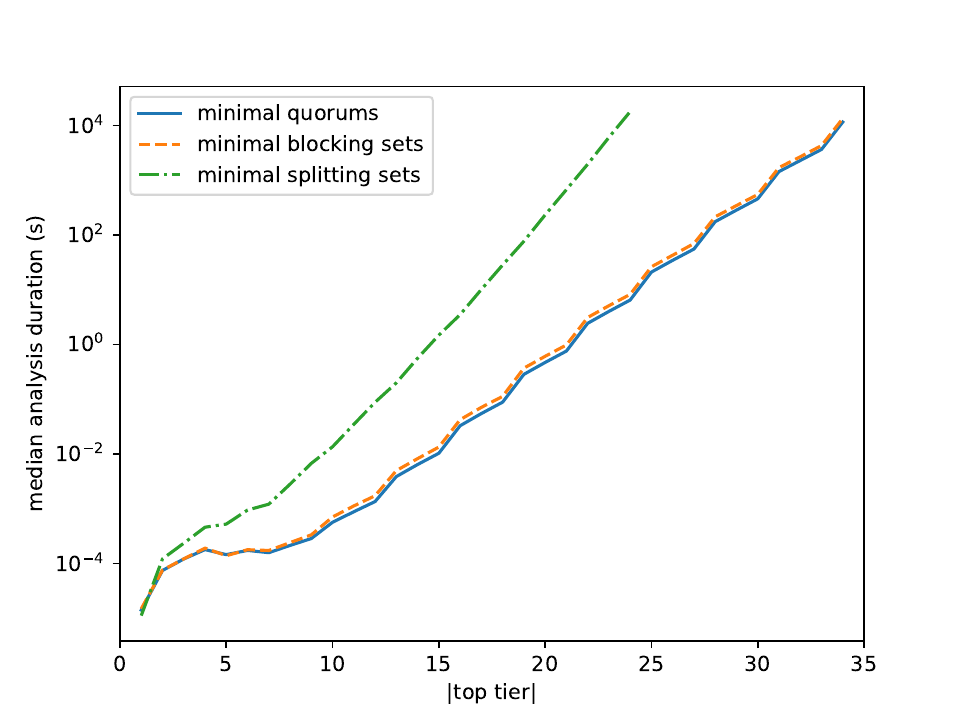}
  \caption{Analysis duration for FBASs resembling classical $3f+1$ quorum systems.
    Analysis optimizations for symmetric top tiers were turned off.}%
  \label{fig:performance_ideal}
\end{figure}

\begin{figure}[htpb]
  \centering
  \includegraphics[width=\linewidth]{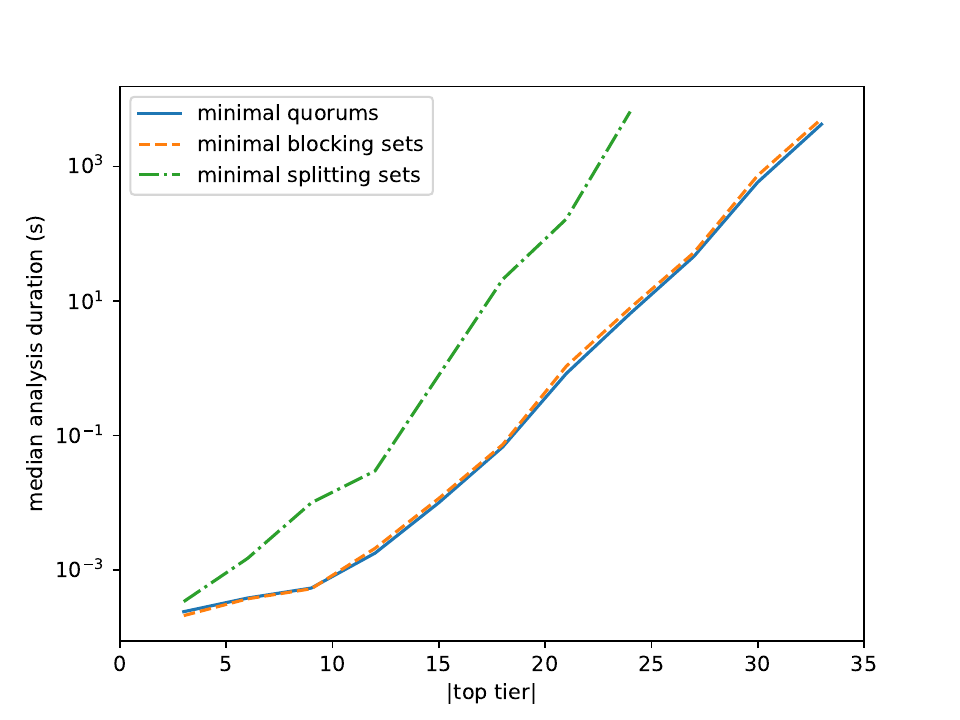}
  \caption{Analysis duration for FBASs resembling the structure of the Stellar network top tier.
    Analysis optimizations for symmetric top tiers were turned off.}%
  \label{fig:performance_stellarlike}
\end{figure}

\Cref{fig:performance_ideal,fig:performance_stellarlike} depict the median measured times on a log scale,
from a set of 10 measurements per FBAS size (we performed the same analysis 10 times, recording individual times).
As was expected, analysis durations raise exponentially with growing top tier sizes $m$.
Analyses start requiring more than an hour to finish at $m \geq 23$ for flat symmetric top tiers and $m \geq 24$ for Stellar-like topologies.
This is a cautiously positive result---top tier sizes observed in practice are currently in the range of 7 organizations (23 raw nodes) for the Stellar network
(cf. \cref{app:stellar_example})
and 7 organizations (10 raw nodes) for the MobileCoin network \cite{ndolo2021crawling}.
It is likely that,
for example through parallelization or the development of additional optimizations for \enquote{almost symmetric} FBASs,
the analysis durations for naturally occurring FBASs can be reduced further.

\section{Bootstrapping FBASs}%
\label{sec:qsc}

The reported openness enabled through the FBAS paradigm comes at the cost of increased configuration responsibilities for node operators.
As discussed in \cref{sec:fba}, each node must become associated with a quorum set (respectively quorum slices) in order to become a useful part of an FBAS.
We will refer to this process as \emph{quorum set configuration} (QSC).
But how should a node operator go about QSC?
Based on the analytical toolset introduced in \cref{sec:analysis},
we can now investigate what kinds of \emph{QSC policies} are plausible and in what kind of FBASs they result.

Notably, we explore how \emph{individual preferences} (such as which nodes should be \enquote{trusted}) can be mapped to the quorum set formalism.
Based on experiments that use Internet topology as a representative graph representation of interdependence and trust,
we conclude that purely individualistic configuration policies can result in systems with low liveness and high complexity.
We outline possible directions for future research by sketching policies with a \emph{strategic} element and empirically demonstrating their effectiveness.

\subsection{QSC policies and their evaluation}%
\label{sub:qsc_policy}

A QSC policy is individually and repeatedly invoked for each node $v \in \V$.
It takes information about a current FBAS instance $(\V, \Q)$ as input and returns a quorum set for $v$,
setting a new value for $\Q(v)$.
We use the quorum set formalization introduced in \cref{sub:quorum_set}.
For illustration, consider the following trivial policy:
\clearpage

\begin{equation}
  \label{qsc:super_safe}
  \tag{Super Safe QSC}
  \forall v \in \V:\quad
  \Q(v) = \qset(v, (\V, \emptyset, \len{\V}))
\end{equation}

If implemented by all nodes in $\V$, \ref{qsc:super_safe} leads to each node having only one quorum slice---$\V$ itself ($\Q(v) = \set{\V}$).
The policy maximizes safety but leads to blocking sets of cardinality 1---any node can block the single quorum in the induced FBAS.

As an improvement, the threshold of the formed quorum sets can be set in resemblance to classical BFT protocols:

\begin{equation}
  \label{qsc:ideal_open}
  \tag{Ideal Open QSC}
  \forall v \in \V:\quad
  \Q(v) = \qset(v, (\V, \emptyset, \optithresh{\V}))
\end{equation}

For $\len{\V} = 3f + 1$ with an $f \in \mathbb{Z}^{+}$, setting the threshold to $t = \optithresh{\V}$
leads to FBASs in which any $2f + 1$ nodes form a (minimal) quorum.
This results in both all minimal blocking sets and all minimal splitting sets of the induced FBAS having cardinality $f + 1$, i.e.,
both safety \emph{and} liveness can be maintained in the face of up to $f$ node failures.

\subsubsection{Choosing validators}%
\label{ssub:choosing_validators}

The preceding example policies construct non-nested quorum sets that use as validators $U$ the set of all nodes in the FBAS ($U = \V$).
These are clearly toy examples---if anything else, without additional mechanisms to restrict or
filter the membership in $\V$, $\V$ can easily become dominated by faulty Sybil~\cite{douceur2002sybil} nodes.

In the scope of this work, and in line with the motivation behind the FBAS paradigm,
we consider $\V$ to enjoy \emph{open membership}, with no universally trusted whitelist or ranking.
For arriving at sensible choices for $U$, QSC policies must therefore take individual knowledge into account.

\subsubsection{Modeling individual preferences}%
\label{ssub:modeling_individual_preference}
QSC policies based on individual preferences contribute node-local knowledge to the collective FBAS configuration.
For example:
\begin{itemize}
  \item Which nodes are \emph{trusted} to be (and stay) non-faulty.
    It is often implied that QSC should reflect some form of trust, e.g.,
    in wordings such as \enquote{flexible trust}~\cite{mazieres2015stellar} or \enquote{asymmetric distributed trust}~\cite{cachin2019asymmetric}.
    While reasoning about the future behavior of participants in a consensus protocol might be an overwhelming task for node operators,
    they may at least encode plausible beliefs about non-Sybilness~\cite{douceur2002sybil}
    (i.e., which groups of nodes are (un)likely to be controlled by the same entity).
  \item To which nodes do \emph{dependencies} exist (e.g., for business reasons).
    Adding nodes of organizations one interacts with to one's quorum sets might be necessary to maintain \enquote{sync} with these organizations~\cite{lokhava2019stellar_payments},
    as opposed to ending up with diverging ledgers in the event of a fork.
\end{itemize}

In the following discussion, we will use \emph{graph representations} for modeling individual preferences.
It is an intriguing hypothesis that the FBAS paradigm can enable
Sybil-resistant and yet energy-efficient permissionless consensus
by bootstrapping quorum systems along
existing trust graphs or interdependence graphs.
In \cref{sub:fbas} we saw that transforming an FBAS into an equally sized regular graph leads to a loss of information,
i.e., can yield only heuristic representations.
In the following sections we pose the inverse question:
How can a \enquote{good} FBAS $(\V, \Q)$ be instantiated from a given graph $G = (\V, E)$?

For evaluating example policies incorporating individual preferences,
we will use the \emph{autonomous system} (AS) relationships graph inferred by the CAIDA project\footnote{
  The CAIDA AS Relationships Dataset,
  1998-01-01 (serial-1) and 2020-01-01 (serial-2),
  \url{https://www.caida.org/data/as-relationships/}
}---a reflection of the interdependence and trust between networks that form the Internet.
The topological structure of the Internet has repeatedly been cited as an argument for the viability of the FBAS model~\cite{mazieres2015stellar,lokhava2019stellar_payments}.
We discuss results based on two snapshots of the AS relations graph:
from January 1998---the earliest available snapshot describing a younger Internet with
$3233$ ASs connected via $4921$ (directed) customer/provider links and $852$ (undirected) peering
links---and from January 2020---with
$67308$ ASs connected via $133864$ customer/provider links and $312763$ peering links.
We will refer to the graphs as $\oldASG$ and $\newASG$.

\subsection{Naive individualistic QSC}%
\label{sub:individual_qsc}

We consider a QSC policy naively individualistic if it is based entirely on \emph{individual preferences}.
We model \enquote{preference for a node} as edges in a graph $G = (\V, E)$, with nodes being aware only of their own graph neighborhood.

Consider a simple representative of this class---forming quorum sets using the entire graph neighborhood of a node,
weighing each neighbor equally within a $3f + 1$ threshold logic
(that models the assumption that strictly less than a third of all neighbors can be faulty):

\begin{equation}
  \label{qsc:all_neighbors}
  \tag{All Neighbors QSC}
  \begin{aligned}
    \forall v \in \V:\quad
        U &= \set{v} \cup \set{v^\prime \in \V \mid (v, v^\prime) \in E}\\
    \Q(v) &= \qset(v, (U, \emptyset, \optithresh{U}))
  \end{aligned}
\end{equation}

If $G$ is a complete graph, we get the same result as with \ref{qsc:ideal_open}.
If $G$ is not connected, we cannot have quorum intersection (and hence safety).
The latter is also true if $G$ contains more than one cluster of sufficient size and weak (relative) connectedness to the rest of the graph.
We can confirm that this is the case for the AS graph snapshots $\oldASG$ and $\newASG$.
Using them, \ref{qsc:all_neighbors} induces FBASs that \emph{do not} enjoy quorum intersection\footnote{
  As determined using \texttt{fbas\_analyzer} (\cref{sec:algorithms}).
}.
The high prevalence of AS peering is a likely explanation for why sufficiently well intraconnected clusters can emerge outside of the \enquote{natural} top tier of the AS graph.

A lack of quorum intersection implies that the induced FBASs may split into multiple sub-FBASs.
This might be a desirable effect when bootstrapping from individual preferences.
For example, separated communities with low levels of inter-community interaction and trust might prefer the added sovereignty of an \enquote{own} FBAS.
We repeated the analysis for the respectively largest sub-FBASs,
with an upper bound on top tier size\footnote{
  Based on the size of the largest quorum that is fully contained in a strongly connected component
  (which is the union of all such quorums).
} of, respectively, $355$ and $14339$ nodes.
Potential top tier sizes of this magnitude make a complete analysis unfeasible (s.a. the discussion on analysis scalability in \cref{sub:performance_eval}).
This is problematic, as the robustness of the resulting FBASs, in terms of safety and liveness, cannot be reliably determined.
Existing weaknesses in the global quorum structure cannot be identified and (strategically) fixed.
Weaknesses, however, are likely to exist.
For example, preliminary analysis results for the FBAS instantiated from $\oldASG$ imply the existence of blocking sets with only $3$ members.

\subsection{Tier-based QSC}%
\label{sub:tier_based_qsc}

Towards making resulting top tiers more focused (and hence, the resulting FBASs more efficient and more amenable to analysis),
QSC policies can incorporate \emph{strategic considerations} in addition to individual preferences.
We explore a prudent example strategy in the following: the weighing of nodes based on \emph{tierness}, or relative importance.
Tierness is an established notion for ASs in the Internet graph.
For FBASs, a tiered quorum structure with every node including only higher-tier neighbors in its quorum sets was proposed (as an example)
as early as in the original FBAS proposal \cite{mazieres2015stellar}.
Classifying nodes based on their tierness is also related to the \emph{quality}-based configuration format currently used by the Stellar software~\cite{lokhava2019stellar_payments}.
Lastly, it is a plausible assumption that the relative tierness of graph neighbors can be estimated locally,
enabling QSC decisions that do not require a global view.

We sketch an example QSC policy in which nodes use only higher-tier nodes in their quorum sets,
or same-tier nodes if none of their neighbor appears to be of higher tier.
We assume that nodes can infer the relative tierness of their graph neighbors.
Specifically, that they can determine which of their neighbors are of a higher tier than themselves.
For simulation, we use the PageRank~\cite{page1999pagerank} score of nodes (calculated without dampening) as a proxy for their tierness.
Each simulated node considers a neighbor of higher (lower) tier if the neighbor's PageRank score is twice as high (low) as its own.
More formally, with $\rank(v)$ denoting the PageRank score of node $v$,
$\outlinks(v)$ the set of its neighbors ($\outlinks(v) := \set{v^\prime \in \V \mid (v, v^\prime) \in E}$),
$H$ its higher-tier neighbors and $P$ its same-tier neighbors (\enquote{peers}):
\begin{equation}
  \label{eq:tier_heuristics}
  \tag{Tierness Heuristics}
  \begin{aligned}
    H(v) = \set{v^\prime \in \outlinks(v) \mid \rank(v^\prime) \geq 2\rank(v)}\\
    P(v) = \set{v^\prime \in \outlinks(v) \mid \frac{1}{2}\rank(v) < \rank(v^\prime) < 2\rank(v)}
  \end{aligned}
\end{equation}

Based on this heuristic, we can define the following QSC policy:
\begin{equation}
  \label{qsc:higher_tier}
  \tag{Higher-Tier Neighbors QSC}
  \begin{aligned}
    \forall v \in \V:
    \quad
    U &= \begin{cases}
          \, \set{v} \cup H(v) & \mbox{if } H(v) \neq \emptyset\\
          \, \set{v} \cup P(v) & \mbox{else}
        \end{cases}\\
    \Q(v) &= \qset(v, (U, \emptyset, \optithresh{U}))
  \end{aligned}
\end{equation}

Our results show that improvements to the naive case are possible when incorporating strategic considerations,
despite the fact that the quorum structure is heavily influenced by individual preferences.
More prominently---top tiers become of more manageable size (both for analysis and for consensus protocols leveraging the FBAS).

\begin{figure}[htpb]
  \begin{subfigure}[t]{\linewidth}
    \centering
    \includegraphics[width=\linewidth]{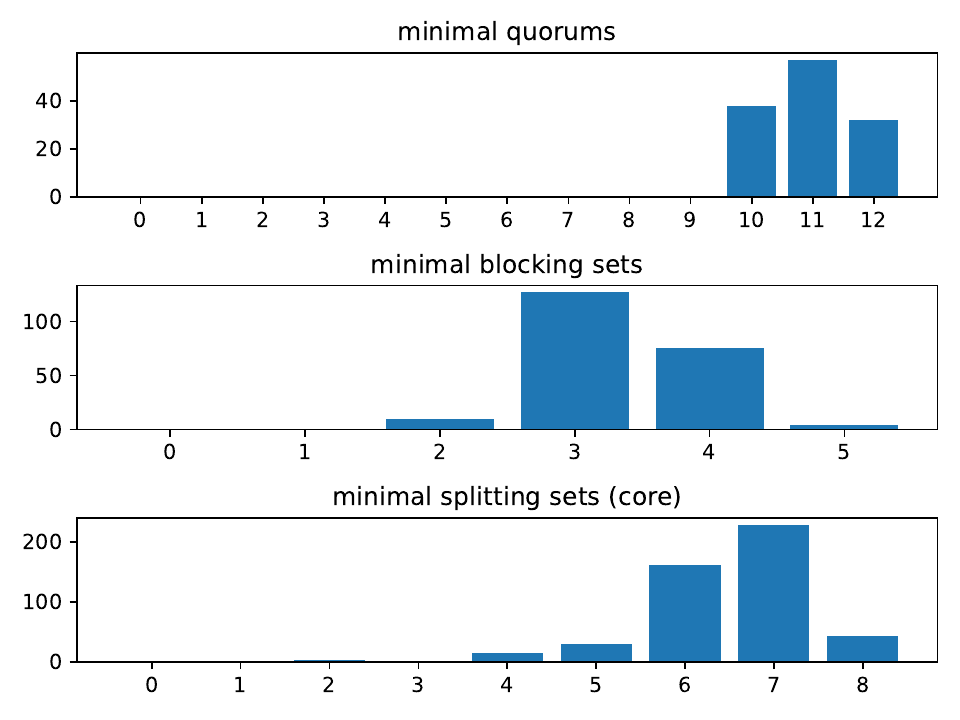}
    \caption{From $\oldASG$. Resulting FBAS has $\len{T} = 15$.}
    \label{fig:higher_tier_as98}
  \end{subfigure}
  \begin{subfigure}[t]{\linewidth}
    \centering
    \includegraphics[width=\linewidth]{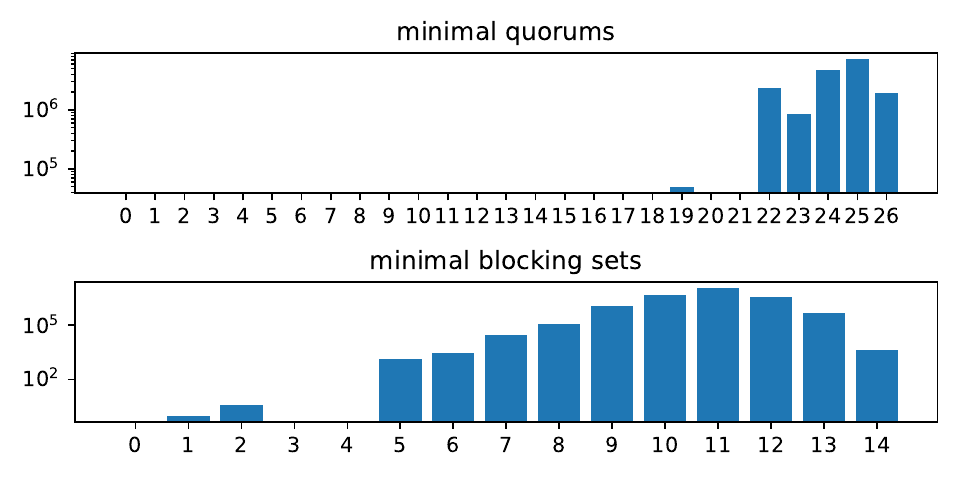}
    \caption{From $\newASG$. Resulting FBAS has $\len{T} = 36$.}
    \label{fig:higher_tier_as20}
  \end{subfigure}
  \caption{Histogram of the cardinalities of relevant sets in FBASs resulting from the application of \ref{qsc:higher_tier}
    using snapshots of the AS relationship graph ($\oldASG$, $\newASG$).}
  \label{fig:higher_tier}
\end{figure}

We simulated the application of \ref{qsc:higher_tier} using the AS graph snapshots $\oldASG$ and $\newASG$.
The two thus induced FBASs contained, respectively,
$2$ and $6$ nodes with one-node quorums sets which we filter our for the subsequent analysis.
We apply \texttt{fbas\_analyzer}, our software-based analysis framework (cf. \cref{sec:algorithms}), to the resulting FBASs.

\Cref{fig:higher_tier} presents the analysis findings.
It depicts histograms of the relevant sets,
i.e., how many minimal quorums, minimal blocking sets or minimal splitting sets of a given size exist for the given FBAS.
For the $\oldASG$ case,
we restricted our minimal splitting sets analysis to the core of the FBAS,
i.e.,
to its top tier and all nodes that are referenced by top tier nodes either directly or transitively\footnote{
  This corresponds to the union of all strongly connected components
  that contain a quorum.
}.
We find that doing so yields more informative results;
the full FBAS contains a large number of splitting sets with cardinality 1 that only split off very small groups of nodes from the rest.
Even when restricting the analysis to core nodes only,
we were not able to fully enumerate the minimal splitting sets for $\newASG$ in reasonable time,
due to the size and specific structure of the resulting FBAS.

Strikingly, our analysis reveals that the liveness of both FBASs is easily compromised.
Despite their relatively large top tiers (of $15$ and $36$ nodes, respectively),
groups of only 2 nodes, and in the $\newASG$ case even one group of only one node,
exist that are sufficient to completely block (or censor) the FBAS.
For comparison, symmetric top tiers of the same size would result in all minimal blocking sets having sizes of, respectively, $5$ and $12$.
This liveness-threatening discrepancy can be explained through \emph{cascading failures}:
If (for example) two nodes fail,
this can result in a third node with a \enquote{weak} quorum set becoming unsatisfiable,
so that three nodes have now de-facto failed,
which can result in a fourth node becoming unsatisfiable,
et cetera.
It can be concluded that the composition and size of smallest blocking sets for an FBAS is heavily influenced by the \enquote{weakest} quorum sets in the FBAS' top tier.
An additional example for cascading failures is given \cref{app:cascading_example}.

\subsection{Symmetry enforcement}%
\label{sub:symmetric_qsc}

The graph-based QSC policies discussed so far easily result in systems that are brittle
(in the sense of small minimal blocking sets)
and hard to analyze.
Both of these characteristics are vastly improved,
relative to top tier size,
in FBASs with symmetric top tiers.
However, symmetric top tiers emerge organically from a preexisting relationship graph $G$ only if the top tier nodes
form a complete subgraph of $G$,
which is not the case in the graphs investigated so far.
As a policy enhancement, nodes believing themselves to be top tier can \emph{mirror} the quorum sets of other apparently top tier nodes,
strategically including non-neighbors in their quorum sets for improving the global FBAS structure.
A behavior along this lines can, in fact, be observed in the live Stellar network (s.a. \cref{app:stellar_example}).

Yet, by making validator decisions independent of the local knowledge representation $G$,
new assumptions become necessary to be able to rule out attacks.
Mirroring makes it easier for malicious top tier nodes to introduce Sybil nodes into the top tier.
The approach is therefore only secure (w.r.t. both safety and liveness) if it can be assumed that nodes in $T$ make plausibility checks before expanding their quorum sets,
so that attempted (Sybil) attacks can be detected.
Given the lack of explicit incentives for running validator nodes in systems like Stellar,
such a burden on the operators of top tier nodes might be viewed as problematic \cite{kim2019stellar_secure}.
However, similar critique can also be voiced against systems (like Bitcoin) that base their security arguments on notions of economic rationality, as economic rationality can also be leveraged by attackers \cite{ford2019rationality}.

\section{Limits on openness and top tier fluidity}%
\label{sec:openness}

The FBAS paradigm reportedly enables the instantiation of consensus systems with
\emph{open membership}~\cite{mazieres2015stellar,lokhava2019stellar_payments}.
And clearly, arbitrary nodes can join an FBAS, causing new quorums to be formed that contain them.
Based on the preceding discussion, however,
we recognize that without creating a new, de-facto disjoint FBAS,
or the active reconfiguration of existing nodes,
new nodes cannot become part of \emph{minimal} quorums and hence minimal blocking sets.
Thereby,
their existence is irrelevant as far as the discussed liveness indicators are concerned,
and their importance for safety is limited.
In \cref{sec:analysis} we defined the notion of a \emph{top tier} to reflect the set of nodes in
an FBAS that is central to liveness, i.e.,
the set of nodes from which all minimal quorums and blocking sets are formed.
The top tier wields absolute power to censor and block the whole FBAS.

In the following, we investigate the question to what extent this top tier can be considered a group with open membership.
How can its power be diluted by promoting additional nodes to top tier status?
Can nodes be \enquote{fired} from the top tier?
We make the case that, in general,
a top tier $T$ can neither grow nor shrink without either the active involvement of existing top tier nodes \emph{or} a loss of safety guarantees.
We base all subsequent projections on the status quo of an FBAS that enjoys quorum intersection despite faulty nodes
(a \emph{safe} FBAS as per the discussion in \cref{sub:intersection}).

\subsection{Top-down top tier change}%
\label{sub:top_down_top_tier_change}

As a preliminary remark, recall that, as per \cref{def:top_tier},
we define the top tier $T$ of an FBAS $(\V, \Q)$ as the union of all its minimal quorums.
$T$ is therefore also a quorum and intersects every quorum in $(\V, \Q)$.

\begin{theorem}[top tier can safely change itself]
  \label{theorem:top_down_change}
  Let $T \subset \V$ be the top tier of an FBAS $(\V, \Q)$ that enjoys quorum availability and quorum intersection.
  Then it is possible,
  without compromising neither quorum availability nor quorum intersection,
  to instantiate a new top tier $T^\prime \subseteq \V, T^\prime \neq \emptyset$
  by changing only the quorum sets of new and old top tier nodes $v \in T \cup T^\prime$.
\end{theorem}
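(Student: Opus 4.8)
The plan is to prove the statement \emph{constructively}: given any nonempty target $T^\prime \subseteq \V$, I exhibit an explicit reconfiguration $\Q^\prime$ that differs from $\Q$ only on $T \cup T^\prime$ and makes $T^\prime$ the top tier of $(\V, \Q^\prime)$ while preserving quorum availability and quorum intersection. The reconfiguration mirrors the \ref{qsc:ideal_open} policy over $T^\prime$: for every $v \in T \cup T^\prime$ set $\Q^\prime(v) = \qset(v, (T^\prime, \emptyset, t^\prime))$ with threshold $t^\prime = \optithresh{T^\prime}$, and leave every node in $W := \V \setminus (T \cup T^\prime)$ untouched. Since only nodes in $T \cup T^\prime$ are reconfigured, the admissibility requirement of the theorem holds by construction, and the work reduces to verifying the three structural properties.

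First I would pin down the quorums of $(\V, \Q^\prime)$. The crucial observation is that $W \subseteq \V \setminus T$, and by \cref{def:top_tier} together with the fact that every quorum contains a minimal quorum, $\V \setminus T$ contains no quorum of the original FBAS; because the slices of $W$-nodes are unchanged and the quorum test on a subset of $W$ refers only to those slices, $W$ still contains no quorum of $(\V, \Q^\prime)$. Next, by construction every reconfigured node's slices contain at least $t^\prime$ nodes of $T^\prime$. Combining these two facts, I would show that no quorum $U$ can sit inside $W$, hence $U$ contains a reconfigured node, hence $\len{U \cap T^\prime} \ge t^\prime$.

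From the uniform bound $\len{U \cap T^\prime} \ge t^\prime$ the three conclusions follow quickly. Quorum availability holds because $T^\prime$ itself is a quorum (each member's threshold $t^\prime \le \len{T^\prime}$ is met). Quorum intersection follows from $2t^\prime > \len{T^\prime}$, which is immediate from $t^\prime = \optithresh{T^\prime}$: two subsets of $T^\prime$ of size at least $t^\prime$ cannot be disjoint, so by \cref{cor:min_quorum_intersection} the new FBAS intersects. Finally, to identify the top tier I would argue that for a minimal quorum $U$ the set $U \cap T^\prime$ is itself a quorum (each of its $T^\prime$-members admits the slice $U \cap T^\prime$), so minimality forces $U = U \cap T^\prime \subseteq T^\prime$; conversely every $t^\prime$-element subset of $T^\prime$ is a minimal quorum, whence $\bigcup \Min{\Quorums} = T^\prime$, i.e.\ $T^\prime$ is exactly the top tier.

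I expect the main obstacle to be the intersection step — more precisely, ruling out a \emph{residual} quorum that avoids $T^\prime$ entirely and could therefore sit disjoint from the quorum $T^\prime$. This is exactly why the theorem must permit reconfiguring all of $T$, not $T^\prime$ alone: old top-tier nodes left with their former slices could otherwise still assemble an independent quorum inside $T \setminus T^\prime$. The argument hinges on the two-part claim that (i) the untouched periphery $W$ lies outside the old top tier and so cannot form a quorum by itself, and (ii) every reconfigured node is now unsatisfiable without $T^\prime$; together these force every quorum to meet $T^\prime$ in at least $t^\prime$ nodes, which is the linchpin on which availability, intersection, and the top-tier characterization all rest.
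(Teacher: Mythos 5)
Your proof is correct and follows essentially the same route as the paper's: reconfigure exactly the nodes in $T \cup T^\prime$ to center their quorum sets on $T^\prime$, use the fact that $\V \setminus T$ (and hence the untouched periphery) contains no quorum to show every quorum of the new FBAS must meet $T^\prime$, and conclude availability and intersection from there. The only differences are strengthenings the paper's own footnote anticipates (\enquote{more robust top tier constructions are possible}): you use the threshold quorum set $(T^\prime, \emptyset, \optithresh{T^\prime})$ instead of the single unanimity slice $T^\prime$, which costs you the extra pigeonhole step $2t^\prime > \len{T^\prime}$, and you explicitly verify that $T^\prime$ is exactly the top tier of the reconfigured FBAS, a point the paper leaves implicit.
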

\begin{proof}
  Let $T^\prime \subseteq \V, T^\prime \neq \emptyset$ be the target top tier.
  Let $\Q^\prime$ be a modification of $\Q$ so that
  $\forall v \in T \cup T^\prime: \Q^\prime(v) = \set{T^\prime}$\footnote{
    Without loss of generality. Clearly, more robust top tier constructions are possible.
  }
  and
  $\forall v \notin T \cup T^\prime: \Q^\prime(v) = \Q(v)$.
  As $T^\prime$ is a quorum w.r.t. $\Q^\prime$, $(T^\prime, \Q^\prime)$ enjoys quorum availability.
  Therefore, $(\V, \Q^\prime)$ enjoys quorum availability.
  $(\V \setminus T', \Q^\prime)$ does not enjoy quorum availability,
  because no node in $T$ is satisfied without $T^\prime$ and
  no node in $\V \setminus T$ can form a quorum without a node from $T$
  (otherwise $T$ would not have been the top tier w.r.t. $\Q$, cf. \cref{def:top_tier}).
  There are therefore no quorums w.r.t. $\Q^\prime$ that are disjoint of $T^\prime$.
  $(\V, \Q^\prime)$ therefore enjoys quorum intersection iff $(T^\prime, \Q^\prime)$ enjoys quorum intersection,
  which it (trivially) does.
  \qed
\end{proof}

The situation is less clear if some nodes $T \setminus T^\prime$ do not wish to leave $T$.
Note, however,
that single nodes can always endanger safety via trivial configurations such as $\Q(v) = \set{\set{v}}$.
If performed by one or more nodes in $T$,
such an act of sabotage can
have an impact on the safety of large portions of the FBAS.

\subsection{Bottom-up top tier change}%
\label{sub:bottom_up_top_tier_change}

In the following,
we assume a \enquote{self-centered} top tier in the sense that all top tier nodes include only other top tier nodes in quorum sets.
Symmetric top tiers (\cref{def:symmetric_top_tier}) have this property,
as do top tiers observed in the wild in the Stellar network (cf. \cref{app:stellar_example}).

\begin{theorem}[no safe top tier change with uncooperative top tier]
  \label{theorem:bottom_up_growth}
  Let $(\V, \Q)$ be an FBAS that enjoys quorum intersection and has
  a \enquote{self-centered} top tier $T \subset \V$ such that all top tier quorum slices are comprised of only top tier nodes
  ($\forall v \in \V : \bigcup{\Q(v)} \subseteq T$).
  Then it is \textbf{not} possible,
  without compromising quorum intersection,
  to instantiate a new top tier $T^\prime \subseteq \V, T^\prime \neq T$
  by changing only the quorum sets of non-top tier nodes $v \in \V \setminus T$.
\end{theorem}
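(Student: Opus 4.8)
The plan is to argue by contradiction: suppose one could modify $\Q$ into some $\Q'$, changing only the quorum sets of nodes in $\V \setminus T$, so that the resulting top tier $T'$ differs from $T$ while $(\V, \Q')$ still enjoys quorum intersection. From this I would construct two disjoint quorums of $(\V, \Q')$, contradicting quorum intersection.

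First I would record two easy consequences of leaving every top tier node's quorum set untouched. Since $T$ is a quorum of $(\V, \Q)$ and the quorum sets of its members are unchanged, $T$ remains a quorum of $(\V, \Q')$. Moreover, every minimal quorum $\Min{U} \in \Min{\Quorums}$ of the original FBAS satisfies $\Min{U} \subseteq T$ (by \cref{def:top_tier}) and so involves only nodes whose quorum sets are unchanged; hence each such $\Min{U}$ is still a quorum of $(\V, \Q')$, and it is still minimal there --- any proper sub-quorum of $\Min{U}$ in $(\V, \Q')$ would again involve only top tier nodes and would therefore also be a quorum in $(\V, \Q)$, contradicting minimality of $\Min{U}$ in the original FBAS. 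Consequently $T = \bigcup \Min{\Quorums} \subseteq T'$, so the top tier can only grow; if $T' \neq T$ there must be a node $w \in T' \setminus T$ lying in some minimal quorum $U_w$ of $(\V, \Q')$.

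The crux is showing that such a $U_w$ is disjoint from $T$, and this is where self-centeredness does the work. Suppose $U_w$ contained a top tier node $u \in T$. Because $U_w$ is a quorum, $u$ has a satisfying slice $q \in \Q'(u) = \Q(u)$ with $q \subseteq U_w$; by self-centeredness $q \subseteq T$, hence $q \subseteq U_w \cap T$. Applying this observation to \emph{every} member of $U_w \cap T$, I would conclude that $U_w \cap T$ is itself a quorum of $(\V, \Q')$. But $w \in U_w \setminus T$, so $U_w \cap T$ is a nonempty \emph{proper} sub-quorum of $U_w$, contradicting minimality of $U_w$. Therefore $U_w \cap T = \emptyset$.

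Finally I would combine the pieces: $T$ and $U_w$ are both nonempty quorums of $(\V, \Q')$ and they are disjoint, so $(\V, \Q')$ fails quorum intersection --- the desired contradiction. Hence no modification confined to $\V \setminus T$ can yield $T' \neq T$ without sacrificing quorum intersection. I expect the main obstacle to be the third step, namely verifying cleanly that $U_w \cap T$ is genuinely a quorum (so that minimality of $U_w$ is actually violated); this is precisely the point where the hypothesis that top tier slices lie entirely inside $T$ is indispensable, and I would take care to state the propagation over all members of $U_w \cap T$ explicitly rather than leaving it implicit.
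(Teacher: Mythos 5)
Your proof is correct, and while it rests on the same two key facts as the paper's, it is organized differently enough that the comparison is instructive. The paper proves the stronger set equality $\Min{\Quorums} = \Min{\Quorums}^\prime$ by two inclusions: the first is exactly your preservation step (old minimal quorums lie in $T$, whose members' quorum sets are frozen, so they remain minimal quorums under $\Q^\prime$); for the second, given a hypothetical new minimal quorum $\Min{U}^\prime \notin \Min{\Quorums}$, it invokes quorum intersection of $(\V, \Q^\prime)$ \emph{first} to get $\Min{U}^\prime \cap T \neq \emptyset$, and then lands the contradiction on self-centeredness, arguing that since $\Min{U}^\prime \cap T$ cannot be a quorum (by minimality of $\Min{U}^\prime$), some $v \in \Min{U}^\prime \cap T$ must have a slice $q \subseteq \Min{U}^\prime$ with $q \setminus T \neq \emptyset$, contradicting $\bigcup \Q(v) \subseteq T$. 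You run precisely this step in contrapositive---self-centeredness forces every member of $U_w \cap T$ to own a slice inside $U_w \cap T$, so a nonempty intersection would itself be a quorum and violate minimality of $U_w$---and then contradict quorum intersection directly by exhibiting the disjoint quorum pair $T$ and $U_w$, using the additional (easy) observation that $T$, being a union of quorums of frozen nodes, remains a quorum under $\Q^\prime$. What your reorganization buys is a cleanly handled edge case: because you choose $U_w$ to contain $w \in T^\prime \setminus T$, the inclusion $U_w \cap T \subset U_w$ is automatically proper, whereas the paper's phrase \enquote{$\Min{U}^\prime \cap T$ cannot be a quorum \ldots as otherwise $\Min{U}^\prime$ would not be a minimal quorum} silently presupposes $\Min{U}^\prime \not\subseteq T$; the case $\Min{U}^\prime \subseteq T$ has to be excluded separately via the first inclusion (such a $\Min{U}^\prime$ would be a quorum w.r.t. $\Q$ and hence contain an old minimal quorum that is still minimal w.r.t. $\Q^\prime$, forcing $\Min{U}^\prime \in \Min{\Quorums}$). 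One shared caveat: like the paper, you implicitly assume $\Min{\Quorums} \neq \emptyset$, so that $T$ is nonempty and a quorum; this is harmless under the paper's standing assumption that the status-quo FBAS enjoys quorum availability.
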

\begin{proof}
  Let $T^\prime \subseteq \V, T^\prime \neq T$ be the top tier of a new FBAS $(\V, \Q^\prime)$ that enjoys quorum intersection.
  Let $\Min{\Quorums}$ and $\Min{\Quorums}^\prime$ be the sets of all minimal quorums of $(\V, \Q)$ and $(\V, \Q^\prime)$, respectively.
  As per \cref{def:top_tier}, $T^\prime \neq T$ implies that $\Min{\Quorums} \neq \Min{\Quorums}^\prime$.

  Assume there exists a $\Min{U} \in \Min{\Quorums} \setminus \Min{\Quorums}^\prime$.
  Then
  $\Min{U}$ is a quorum w.r.t. $\Q$
  and either (a) not a quorum w.r.t. $\Q^\prime$
  or (b) not minimal w.r.t. $\Q^\prime$.
  However, we require that the quorum sets of top tier nodes don't change: $\forall v \in T: \Q^\prime(v) = \Q(v)$.
  Therefore $\Min{U}$ is a quorum also w.r.t. $\Q^\prime$, contradicting (a).
  Hence, (b) must hold and there must be a
  $\Min{U}^\prime \in \Min{\Quorums}^\prime$ such that $\Min{U}^\prime \subset \Min{U}$ (cf. \cref{def:minimal_set}).
  As $\Min{U}^\prime \subseteq \Min{U} \subseteq T$, $\Min{U}^\prime$ being a quorum w.r.t. $\Q^\prime$ implies it also being a quorum w.r.t. $\Q$.
  But then $\Min{U}$ is not minimal w.r.t. $\Q$, implying $\Min{U} \notin \Min{\Quorums}$ and thus again leading to a contradiction.
  This proves that $\Min{\Quorums} \subseteq \Min{\Quorums}^\prime$.

  Assume now there exists a $\Min{U}^\prime \in \Min{\Quorums}^\prime \setminus \Min{\Quorums}$ and let $\Min{U} \in \Min{\Quorums}$.
  As $(\V, \Q^\prime)$ enjoys quorum intersection,
  $\Min{U}^\prime \cap \Min{U} \neq \emptyset$ and $\Min{U}^\prime$ contains members of the \enquote{old} top tier $T$.
  $\Min{U}^\prime$ is a quorum w.r.t. $\Q^\prime$,
  but $\Min{U}^\prime \cap T$ cannot be a quorum w.r.t. $\Q^\prime$ as otherwise $\Min{U}^\prime$ would not be a minimal quorum.
  There must therefore exist a node $v \in \Min{U}^\prime \cap T$ with a quorum slice $q \in \Q^\prime(v)$ such that
  $(\Min{U}^\prime \cap T) \subset q \subseteq \Min{U}^\prime$ (cf. \cref{def:quorum}),
  i.e., $q \setminus T \neq \emptyset$.
  As $v \in T$, we require that $\Q^\prime(v) = \Q(v)$ and $\bigcup{\Q(v)} \subseteq T$,
  which leads to a contradiction since $q \in \Q(v)$ and $q \setminus T \neq \emptyset$.
  It must therefore hold that $\Min{\Quorums} \setminus \Min{\Quorums}^\prime = \emptyset$, $\Min{\Quorums}=\Min{\Quorums}^\prime$ and $T = T^\prime$.
  \qed
\end{proof}

\subsection{Consequences}%
\label{sub:discussion}

Who determines which FBAS nodes get to form the top tier?
Our results imply that,
if maintaining safety is seen as an untouchable requirement,
the top tier $T_i$ of an FBAS $(\V_i, \Q_i)$ at \enquote{iteration} $i$ is legitimated by decisions of, exclusively, members of $T_{i-1} \cup T_i$
(if none of them cooperates, we lose safety, if all of them cooperate, we don't).
Because of the top tier's importance to the liveness, safety and performance achievable within a given FBAS,
open membership in $\V_i$ is of little benefit without open membership in $T_i$.

How closed is the membership in $T_i$?
It might be sufficient that only some nodes in $T_{i-1}$ support a transition to $T_i$.
If reactive QSC policies are used (e.g., for enforcing top tier symmetry as discussed in \cref{sub:symmetric_qsc}),
one cooperative top tier node $v \in T_{i-1}$ might already be enough for growing the top tier
in a way that is robust and doesn't only dilute the relative influence of $v$.
How partially supported top tier changes would play out must be investigated based on more specific scenarios.
We expect the safe \enquote{firing} of top tier nodes to be especially challenging.

Which begs the question---can the safety requirement
be weakened?
For example,
given sufficiently good (out-of-band) coordination between members of $\V_{i-1} \setminus T_{i-1}$,
a $(\V_i, \Q_i)$ might be instantiated in which at least $(\V_i \setminus T_{i-1}, \Q_i)$ enjoys quorum intersection.
It is conceivable that novel protocols can be developed,
possibly also leveraging the FBAS structure,
that reduce the notorious difficulty of coordinating such bottom-up actions.

\section{Conclusion}
\label{sec:conclusion}

We demonstrate in this paper that,
despite the complexity of the FBAS model,
the properties of concrete FBAS instances can be described in a way that is
both precise and intuitive, and allows comparisons with more classical Byzantine agreement systems.
We propose the notions of minimal blocking sets, minimal splitting sets and top tiers to describe
which groups of nodes can compromise liveness and safety.
In essence, minimal blocking sets and minimal splitting sets describe minimal viable threat scenarios,
thereby enabling a comprehensive risk assessment in FBAS-based systems like the Stellar network.
While some analyses imply computational problems of exponential complexity,
we developed and implemented algorithms that enable the exact analysis of a wide range of interesting FBASs.

Our implemented analysis framework also enables us to investigate how individual configurations result in global properties.
We find that overly strategic configuration policies result in FBASs that are indistinguishable from permissioned systems.
Individualistic approaches, on the other hand, cannot guarantee safe results while quickly resulting in systems that are infeasible to analyze.
Adding some strategic decision-making at organically emerging top tier nodes
offers a potential middle way towards robust FBASs instantiated from the sum of individual preferences.

Independently of the way in which a given FBAS came to be, however,
the composition of a once established top tier cannot be influenced without the cooperation of existing top tier nodes,
without at the same time threatening safety.
This seems to place the FBAS paradigm closer to the \enquote{permissioned consensus} camp than hoped.
More investigation is needed to determine the exact impact of bottom-up top tier changes
(as in number of nodes affected by a loss of safety or liveness, for example)
and to formulate possible coordination strategies to keep such impacts low.

\bibliographystyle{spmpsci}
\bibliography{paper}

\newpage

\appendix

\section{Additional corollaries, theorems and proofs}%
\label{app:proofs}

\subsection{Minimal quorums}%
\label{appsub:minimal_quorums}

\begin{corollary}[minimal quorum intersection $\iff$ quorum intersection]
  \label{cor:min_quorum_intersection}
  Let $\Quorums \subseteq 2^{\V}$ be the set of all quorums of the FBAS $(\V, \Q)$,
  $\Min{\Quorums} \subseteq \Quorums$ be the set of all minimal quorums.
  All pairs of $U_1, U_2 \in \Quorums$ intersect iff all pairs of $\Min{U_1}, \Min{U_2} \in \Min{\Quorums}$ intersect.
\end{corollary}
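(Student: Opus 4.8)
The plan is to prove the two implications separately, with the backward direction resting on a single structural lemma. The forward direction is immediate: since $\Min{\Quorums} \subseteq \Quorums$ by definition, if every pair of quorums intersects then in particular every pair of \emph{minimal} quorums intersects.

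For the backward direction, the key observation I would establish first is that \emph{every quorum contains a minimal quorum}. This follows from the finiteness of $\V$: given a quorum $U$, I would consider the family of quorums contained in $U$ (non-empty, as it contains $U$ itself) and pick an inclusion-minimal member $\Min{U}$. Any proper quorum subset of $\Min{U}$ would again be a quorum contained in $U$, contradicting minimality of $\Min{U}$ within that family; hence $\Min{U}$ has no proper quorum subset and is a minimal quorum in the sense of \cref{def:minimal_set}.

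With this lemma in hand, the backward direction is a short set-theoretic argument. Suppose every pair of minimal quorums intersects, and take arbitrary $U_1, U_2 \in \Quorums$. Choose minimal quorums $\Min{U_1} \subseteq U_1$ and $\Min{U_2} \subseteq U_2$. By assumption $\Min{U_1} \cap \Min{U_2} \neq \emptyset$, and since $\Min{U_1} \cap \Min{U_2} \subseteq U_1 \cap U_2$, it follows that $U_1 \cap U_2 \neq \emptyset$.

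I expect the only real content to be the containment lemma; everything else is routine bookkeeping. The one subtlety worth stating carefully is that an inclusion-minimal element of the ``quorums contained in $U$'' family is genuinely a \emph{global} minimal quorum (no smaller quorum exists anywhere below it), which holds precisely because any quorum strictly below it would itself sit inside $U$ and thus contradict the chosen minimality. Since $\V$ is finite no well-foundedness machinery is needed beyond existence of a minimal element.
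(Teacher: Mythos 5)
Your proposal is correct and follows essentially the same route as the paper's proof: the forward direction via $\Min{\Quorums} \subseteq \Quorums$, and the backward direction via the fact that every quorum contains a minimal quorum, whose intersection witnesses the intersection of the containing quorums. The only difference is that you spell out the finiteness argument for the containment lemma, which the paper simply cites as a consequence of \cref{def:minimal_set}; your justification of it (including the observation that an inclusion-minimal quorum inside $U$ is globally minimal) is sound.
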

\begin{proof}
  Since $\Min{\Quorums} \subseteq \Quorums$,
  $\forall U_1, U_2 \in \Quorums : U_1 \cap U_2 \neq \emptyset$
  trivially implies that
  $\forall \Min{U}_1, \Min{U}_2 \in \Min{\Quorums} : \Min{U}_1 \cap \Min{U}_2 \neq \emptyset$.
  The other direction follows because
  $\forall U_1, U_2 \in \Quorums \; \exists \Min{U}_1, \Min{U}_2 \in \Min{\Quorums} :
  \Min{U}_1 \subseteq U_1 \land \Min{U}_2 \subseteq U_2$
  ($\Min{\Quorums}$ being the set of all minimal sets w.r.t. $\Quorums$; s.a. \cref{def:minimal_set}).
  If all pairs in $\Min{\Quorums}$ intersect,
  so must therefore all pairs in $\Quorums$.
  \qed
\end{proof}

This was previously also shown in \cite{lachowski2019complexity}.

\subsection{Blocking sets}%
\label{appsub:blocking_sets}

\begin{corollary}[blocking for all $\implies$ blocking for all minimal]
  \label{cor:block_all_minimal}
  Let $\Quorums \subseteq 2^{\V}$ be the set of all quorums of the FBAS $(\V, \Q)$,
  and $\Min{\Quorums} \subseteq \Quorums$ be the set of all minimal quorums.
  If $B$ is a blocking set for $\Quorums$,
  then it is also a blocking set for $\Min{\Quorums}$.
\end{corollary}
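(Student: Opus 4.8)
The plan is to exploit the containment $\Min{\Quorums} \subseteq \Quorums$ directly, since this corollary is essentially an immediate consequence of the definition of a blocking set applied to a subcollection. First I would unfold the hypothesis: $B$ being blocking for $\Quorums$ means, by \cref{def:blocking_set}, that $\forall U \in \Quorums: B \cap U \neq \emptyset$. The goal is to establish the analogous statement restricted to minimal quorums, namely $\forall \Min{U} \in \Min{\Quorums}: B \cap \Min{U} \neq \emptyset$.

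The key observation is that every minimal quorum is itself a quorum, i.e., $\Min{\Quorums} \subseteq \Quorums$ as stated in \cref{sub:minimal_quorums} and recalled in the corollary's hypotheses. Consequently, any universally quantified property that holds for all members of $\Quorums$ holds a fortiori for all members of the subcollection $\Min{\Quorums}$. Instantiating the hypothesis at an arbitrary $\Min{U} \in \Min{\Quorums}$ (which is legitimate precisely because $\Min{U} \in \Quorums$) yields $B \cap \Min{U} \neq \emptyset$, which is exactly what it means for $B$ to be blocking for $\Min{\Quorums}$.

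Honestly, there is no genuine obstacle here---this direction is trivial, and the difficulty (if any) lies entirely in the \emph{converse}, which is not claimed by this corollary. The converse (that blocking for $\Min{\Quorums}$ implies blocking for all of $\Quorums$) would require the nontrivial fact that every quorum contains a minimal quorum, so that intersecting the minimal one forces intersection with the larger one. The present statement sidesteps that entirely, so the proof is a single application of the subset relation. I would therefore keep the write-up to one or two sentences and avoid overcomplicating what is a direct specialization of \cref{def:blocking_set}.
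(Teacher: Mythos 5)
Your proof is correct and is essentially identical to the paper's own: both unfold \cref{def:blocking_set} and instantiate the universal hypothesis at an arbitrary $\Min{U} \in \Min{\Quorums}$ via the containment $\Min{\Quorums} \subseteq \Quorums$. Your remark that the nontrivial content lies in the converse (handled separately in \cref{cor:block_minimal_all} via the fact that every quorum contains a minimal quorum) also matches the paper's structure exactly.
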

\begin{proof}
  $B$ is a blocking set for
  $\Quorums \iff \forall U \in \Quorums : B \cap U \neq \emptyset$ (\cref{def:blocking_set}).
  $\Min{\Quorums} \subseteq \Quorums \implies \forall \Min{U} \in \Min{\Quorums} : B \cap
  \Min{U} \neq \emptyset$, so that $B$ is also a blocking set for $\Min{\Quorums}$.
  \qed
\end{proof}

\begin{corollary}[blocking for all minimal $\implies$ blocking for all]
  \label{cor:block_minimal_all}
  Let $\Quorums \subseteq 2^{\V}$ be the set of all quorums of the FBAS $(\V, \Q)$,
  and $\Min{\Quorums} \subseteq \Quorums$ be the set of all minimal quorums.
  If $B$ is blocking set for $\Min{\Quorums}$,
  then it is also a blocking set for $\Quorums$.
\end{corollary}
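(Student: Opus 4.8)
The plan is to exploit the dual fact to the one used in \cref{cor:block_all_minimal}: whereas that direction relied on the inclusion $\Min{\Quorums} \subseteq \Quorums$, this direction relies on the observation that \emph{every} quorum contains a minimal quorum. Formally, I would first establish that for each $U \in \Quorums$ there exists some $\Min{U} \in \Min{\Quorums}$ with $\Min{U} \subseteq U$. This is the same structural fact invoked in the proof of \cref{cor:min_quorum_intersection} and follows from the finiteness of $2^{\V}$: the collection of quorums contained in a given quorum $U$ is nonempty (it contains $U$ itself) and finite, hence possesses at least one minimal element with respect to set inclusion, which is by definition a member of $\Min{\Quorums}$ (cf. \cref{def:minimal_set}).

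With that fact in hand, the argument is a short chain of inclusions. Suppose $B$ is a blocking set for $\Min{\Quorums}$, i.e., $\forall \Min{U} \in \Min{\Quorums} : B \cap \Min{U} \neq \emptyset$ (\cref{def:blocking_set}). I would then take an arbitrary $U \in \Quorums$, pick a minimal quorum $\Min{U} \subseteq U$ as above, and note that
\begin{align*}
  B \cap U \supseteq B \cap \Min{U} \neq \emptyset,
\end{align*}
where the first inclusion holds because $\Min{U} \subseteq U$ and the nonemptiness holds by the assumption on $B$. Since $U$ was arbitrary, $B \cap U \neq \emptyset$ for all $U \in \Quorums$, which is exactly the statement that $B$ is a blocking set for $\Quorums$.

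I expect the only genuinely substantive step to be justifying the existence of a minimal quorum inside every quorum; everything else is routine set manipulation. That step is not difficult but is the logical crux, and it is worth stating explicitly that it rests on $\V$ (and hence $2^{\V}$) being finite, so that minimal elements are guaranteed to exist. Together with \cref{cor:block_all_minimal}, this corollary yields the desired equivalence: a set is blocking for $\Quorums$ if and only if it is blocking for $\Min{\Quorums}$, justifying the restriction to minimal quorums throughout the analysis of minimal blocking sets.
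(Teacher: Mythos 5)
Your proposal is correct and follows essentially the same route as the paper's proof: the paper likewise reduces the claim to the fact that every $U \in \Quorums$ contains some $\Min{U} \in \Min{\Quorums}$ (which it asserts directly from \cref{def:minimal_set}) and then concludes $B \cap U \supseteq B \cap \Min{U} \neq \emptyset$. Your only addition is to spell out the finiteness argument guaranteeing that minimal elements exist, a step the paper leaves implicit.
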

\begin{proof}
  $B$ is a blocking set for
  $\Min{\Quorums} \implies \forall U \in \Min{\Quorums} : B \cap U \neq \emptyset$ (\cref{def:blocking_set}).
  $\Min{\Quorums} \subseteq \Quorums$ and all $U \in \Min{\Quorums}$ are minimal w.r.t. $\Quorums$
  $\implies \forall U \in \Quorums \; \exists \Min{U} \in \Min{\Quorums} : \Min{U} \subseteq U$ (cf. \cref{def:minimal_set})
  $\implies U \cap B \neq \emptyset \implies$ $B$ is blocking for all $U \in \Quorums$.
  \qed
\end{proof}

\begin{corollary}[minimal blocking sets result from minimal quorums]
  \label{cor:min_block_from_min_quorum}
  Let $\Quorums \subseteq 2^{\V}$ be the set of all quorums of the FBAS $(\V, \Q)$,
  $\Min{\Quorums} \subseteq \Quorums$ be the set of all minimal quorums,
  and $\Min{\BlockSetSet} \subseteq 2^{\V}$ be the set of all minimal blocking sets.
  Then each minimal blocking set $\Min{B} \in \Min{\BlockSetSet}$ of the FBAS
  is minimally blocking w.r.t. $\Min{\Quorums}$, i.e., $\Min{B}$ intersects every minimal quorum $\Min{U} \in \Min{\Quorums}$
  and no $B^\prime \subset \Min{B}$ intersects every minimal quorum $\Min{U} \in \Min{\Quorums}$.
\end{corollary}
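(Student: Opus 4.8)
The plan is to observe that the two immediately preceding corollaries---\cref{cor:block_all_minimal} (blocking for all $\implies$ blocking for all minimal) and \cref{cor:block_minimal_all} (blocking for all minimal $\implies$ blocking for all)---together establish a clean \emph{equivalence}: a set $B \subseteq \V$ is blocking with respect to $\Quorums$ if and only if it is blocking with respect to $\Min{\Quorums}$. This equivalence is the workhorse of the entire argument, so I would state it explicitly at the outset and then simply transport the definition of minimal blocking sets across it.

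First I would verify the easy half of the claim. By \cref{def:blocking_set}, every minimal blocking set $\Min{B} \in \Min{\BlockSetSet}$ is in particular a blocking set for $\Quorums$, hence by \cref{cor:block_all_minimal} it intersects every minimal quorum $\Min{U} \in \Min{\Quorums}$. That already yields the first assertion of the corollary.

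Next I would handle minimality by contradiction. Suppose some proper subset $B^\prime \subset \Min{B}$ intersects every minimal quorum, i.e., $B^\prime$ is blocking with respect to $\Min{\Quorums}$. Applying \cref{cor:block_minimal_all}, $B^\prime$ is then also blocking with respect to the \emph{full} quorum set $\Quorums$. But $\Min{B}$ was assumed minimal among blocking sets for $\Quorums$, so no proper subset of $\Min{B}$ can block all of $\Quorums$---contradicting $B^\prime \subset \Min{B}$. Hence no proper subset of $\Min{B}$ blocks all minimal quorums, giving the second assertion.

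There is essentially no hard step here: the result is a direct corollary of the preceding two corollaries, whose combined effect is that the blocking property is insensitive to whether one quantifies over $\Quorums$ or over $\Min{\Quorums}$. The only point requiring minor care is to not conflate \emph{minimality of the blocking set} (a property of $\Min{B}$) with \emph{minimality of the quorums} (the passage from $\Quorums$ to $\Min{\Quorums}$); keeping the two directions of the equivalence attached to their respective halves of the claim avoids any circularity.
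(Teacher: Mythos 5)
Your proof is correct and follows essentially the same route as the paper's: both arguments combine \cref{cor:block_all_minimal} and \cref{cor:block_minimal_all} into the equivalence \enquote{blocking w.r.t.\ $\Quorums$ iff blocking w.r.t.\ $\Min{\Quorums}$} and then transport minimality across it, with the paper phrasing this as an equality of the two families of blocking sets while you unpack the same step into the two directions applied element-wise. No gap; your explicit caution about not conflating minimality of the blocking set with minimality of the quorums matches the paper's implicit structure exactly.
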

\begin{proof}
  Let $\BlockSetSet \subseteq 2^{\V}$ be the set of all blocking sets w.r.t. $\Min{\Quorums}$.
  Based on \cref{cor:block_all_minimal} and \cref{cor:block_minimal_all},
  $\BlockSetSet$ is exactly the set of all blocking sets for $\Quorums$.
  Hence the set of all minimal sets w.r.t. $\BlockSetSet$
  is exactly the set of all minimal blocking sets w.r.t. $\Quorums$
  and therefore the set of all minimal blocking sets for $(\V, \Q)$,
  or $\Min{\BlockSetSet} \subseteq \BlockSetSet$.
  Likewise, as $\BlockSetSet$ is the set of all blocking sets w.r.t. $\Min{\Quorums}$,
  $\Min{\BlockSetSet}$ is the set of all minimal blocking sets w.r.t. $\Min{\Quorums}$.
  \qed
\end{proof}

\subsection{Splitting sets}%
\label{appsub:splitting_sets}

\begin{definition}[quorum expanders]
  \label{def:quorum_expanders}
  For an FBAS $(\V, \Q)$, a \emph{quorum expander} is any
  node $v \in \V$ that is part of a quorum slice $q \in \Q(v^\prime)$ of another node $v^\prime \in \V$
  that is a not a quorum slice for $v$,
  i.e., any node $v \in \V$ for which
  $\exists v^\prime \in \V, q^\prime \in \Q(v^\prime):
  v \in q^\prime \land (\forall q \in \Q(v): q \not\subseteq q^\prime)$.
\end{definition}

\begin{theorem}[minimal splitting sets formed exclusively of quorum expanders and top tier nodes]
  \label{theorem:minimal_splitting_set_nodes_are_quorum_expanders_or_top_tier}
  Let $\Min{\SplitSetSet} \subseteq 2^{\V}$ be the set of all minimal splitting sets of the FBAS $(\V, \Q)$,
  $X \subseteq \V$ the set of all quorum expanders of the FBAS (\cref{def:quorum_expanders})
  and
  $T \subseteq \V$ the top tier of the FBAS
  (the union of all minimal quorums, \cref{def:top_tier}).
  Then it holds that $\bigcup \Min{\SplitSetSet} \subseteq T \cup X$.
\end{theorem}
\begin{proof}
  Let $\Min{S} \in \Min{\SplitSetSet}$ and $s \in \Min{S}$ be an arbitrary node in that splitting set.
  We show that $s \in T$ or $s \in X$ must hold.

  $\Min{S}$ is a minimal splitting set, therefore $\Min{S} \setminus \set{s}$ is not a splitting set for any $s$.
  Consequently,
  $(\V, \Q)^{\Min{S}\setminus{\set{s}}}$ enjoys quorum intersection while $(\V, \Q)^{\Min{S}}$ doesn't.
  Let $\Min{U}_1, \Min{U}_2 \subset \V, \Min{U}_1 \cap \Min{U}_2 = \emptyset$
  be two non-intersecting minimal quorums in $(\V, \Q)^{\Min{S}}$
  such that $\Min{U}_1$ does not contain a quorum in $(\V, \Q)^{\Min{S}\setminus{\set{s}}}$.
  (If both $\Min{U}_1$ and $\Min{U}_2$ contained quorums in $(\V, \Q)^{\Min{S}\setminus{\set{s}}}$, the FBAS would lack quorum intersection.)

  If $\Min{U}_1 \cup \set{s}$ contains a quorum in $(\V, \Q)^{\Min{S}\setminus{\set{s}}}$,
  then $\Min{U}_1 \cup \set{s}$ contains a minimal quorum $\Min{U}_1^\prime \subseteq \Min{U}_1 \cup \set{s}$ that contains $s$.
  Consequently, $s$ is part of the top tier $T^\prime$ of $(\V, \Q)^{\Min{S}\setminus{\set{s}}}$,
  i.e., $s \in T^\prime$.
  As the only effect of the delete operation (\cref{def:delete}) on $\Q$ is to remove nodes from quorum slices
  and both $(\V, \Q)$ and $(\V, \Q)^{\Min{S}\setminus{\set{s}}}$ enjoy quorum intersection,
  it holds that $T^\prime \subseteq T$
  (the proof is analogous to the proof of \cref{theorem:bottom_up_growth}).
  Consequently, $s \in T$.

  If $\Min{U}_1 \cup \set{s}$ does not contain a quorum in $(\V, \Q)^{\Min{S}\setminus{\set{s}}}$,
  then,
  because $\Min{U}_1$ is a quorum in $(\V, \Q)^{\Min{S}}$,
  the forming of a quorum fails because of $s$.
  For $(\V^\prime, \Q^\prime) := (\V, \Q)^{\Min{S}\setminus{\set{s}}}$,
  it must hold that $\exists v \in \Min{U}_1, \exists q \in \Q^\prime(v) : q \subseteq \Min{U}_1 \cup \set{s}$
  while $\forall q^\prime \in \Q^\prime(s) : q \not\subseteq \Min{U}_1 \cup \set{s}$.
  The node $s$ is therefore one of the quorum expanders $X^\prime$ of $(\V, \Q)^{\Min{S}\setminus{\set{s}}}$,
  i.e., $s \in X^\prime$.
  It trivially holds that $X^\prime \subseteq X$ and,
  therefore,
  $s \in X$.

  \qed
\end{proof}

\subsection{Top tier}%
\label{appsub:top_tier}

\begin{corollary}[minimal blocking sets formed exclusively of top tier nodes]
  \label{cor:min_block_from_top_tier}
  Let $T$ be the top tier of an FBAS $(\V, \Q)$,
  and $\Min{\BlockSetSet} \subseteq 2^{\V}$ be the set of all minimal blocking sets of $(\V, \Q)$.
  Then $\forall \Min{B} \in \Min{\BlockSetSet} : \Min{B} \subseteq T$.
\end{corollary}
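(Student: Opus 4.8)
The plan is to prove the contrapositive at the level of individual nodes: I would show that every node of a minimal blocking set must lie in at least one minimal quorum, which is exactly the statement $\Min{B} \subseteq T$ since $T = \bigcup \Min{\Quorums}$ by \cref{def:top_tier}. The natural lever is \cref{cor:min_block_from_min_quorum}, which tells us that a minimal blocking set $\Min{B}$ is \emph{minimally} blocking already with respect to $\Min{\Quorums}$ alone---that is, $\Min{B}$ intersects every minimal quorum, and no proper subset does. So the whole argument can be carried out using minimal quorums rather than the full quorum set $\Quorums$, which is what makes the connection to $T$ immediate.

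The key step is a contradiction argument against minimality. Suppose some $v \in \Min{B}$ satisfies $v \notin T$; then, by definition of $T$, $v$ belongs to \emph{no} minimal quorum. I would then claim that $\Min{B} \setminus \set{v}$ still intersects every minimal quorum. To see this, fix an arbitrary $\Min{U} \in \Min{\Quorums}$. Since $\Min{B}$ blocks $\Min{\Quorums}$ we have $\Min{B} \cap \Min{U} \neq \emptyset$, so there is a witness $w \in \Min{B} \cap \Min{U}$. Because $v \notin \Min{U}$, necessarily $w \neq v$, hence $w \in (\Min{B} \setminus \set{v}) \cap \Min{U}$, giving $(\Min{B} \setminus \set{v}) \cap \Min{U} \neq \emptyset$. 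As $\Min{U}$ was arbitrary, $\Min{B} \setminus \set{v}$ intersects every minimal quorum.

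To finish, I would translate this back into a genuine blocking set: a set that intersects all minimal quorums is a blocking set for all of $\Quorums$ by \cref{cor:block_minimal_all}. Thus $\Min{B} \setminus \set{v}$ is a blocking set that is a proper subset of $\Min{B}$, contradicting the minimality of $\Min{B}$. Therefore no such $v$ exists, every node of $\Min{B}$ lies in $T$, and $\Min{B} \subseteq T$. Since $\Min{B} \in \Min{\BlockSetSet}$ was arbitrary, the claim follows.

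I do not expect a serious obstacle here; the argument is essentially bookkeeping once \cref{cor:min_block_from_min_quorum} is invoked. The only point requiring mild care is recognizing that the reduction to \emph{minimal} quorums (rather than all quorums) is precisely what makes "$v$ lies in some quorum I am intersecting" coincide with "$v \in T$"; attempting the same argument directly against $\Quorums$ would not localize $v$ to the top tier, since non-top-tier nodes can still appear in non-minimal quorums. The degenerate case in which $(\V,\Q)$ has no quorums (so $\Min{B} = \emptyset$) is handled trivially, as $\emptyset \subseteq T$.
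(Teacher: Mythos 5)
Your proof is correct and takes essentially the same route as the paper: it reduces the question to minimal quorums via \cref{cor:min_block_from_min_quorum} and \cref{cor:block_minimal_all}, then concludes from $T = \bigcup{\Min{\Quorums}}$ (\cref{def:top_tier}). If anything, your explicit contradiction step---that removing a node $v \notin T$ from $\Min{B}$ still intersects every minimal quorum, violating minimality---spells out the bookkeeping that the paper's terser one-line proof leaves implicit.
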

\begin{proof}
  From \cref{cor:min_block_from_min_quorum}
  it follows that
  all $\Min{B} \in \Min{\BlockSetSet}$
  are formed of nodes contained in at least one minimal quorum
  $\Min{U} \in \Min{\Quorums}$.
  As $T = \bigcup{\Min{\Quorums}}$ (\cref{def:top_tier}),
  $\forall \Min{B} \in \Min{\BlockSetSet} : \Min{B} \subseteq T$.
  \qed
\end{proof}

\begin{theorem}[each top tier node in at least one minimal blocking set]
  \label{theorem:top_tier_from_min_block}
  Let $T$ be the top tier of an FBAS $(\V, \Q)$,
  and $\Min{\BlockSetSet} \subseteq 2^{\V}$ be the set of all minimal blocking sets of $(\V, \Q)$.
  Then for each top tier node $v \in T$ there is at least one minimal blocking set
  $\Min{B} \in \Min{\BlockSetSet}$ such that $v \in \Min{B}$.
\end{theorem}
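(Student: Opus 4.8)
The plan is to fix an arbitrary top tier node $v \in T$ and exhibit a blocking set $B$ in which $v$ is \emph{critical}, meaning $B$ is blocking but $B \setminus \set{v}$ is not; the conclusion then follows by extracting a minimal blocking set from $B$. First I would record two elementary facts. (i) Blocking is monotone: any superset of a blocking set is again blocking, since enlarging a set can only enlarge its intersection with each quorum (\cref{def:blocking_set}). (ii) A set $B$ is blocking iff its complement $\V \setminus B$ contains no quorum---this is merely the contrapositive of \cref{def:blocking_set}. From (i) it follows that whenever $B$ is blocking while $B \setminus \set{v}$ is not, every blocking subset of $B$ must contain $v$; for a blocking subset avoiding $v$ would lie inside $B \setminus \set{v}$ and, by monotonicity, force $B \setminus \set{v}$ to be blocking, a contradiction. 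Since $\V$ is finite, $B$ contains at least one minimal blocking set (obtained by greedily deleting nodes that preserve the blocking property), and by the preceding remark that minimal blocking set necessarily contains $v$---exactly the statement to be proved.

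It then remains to construct such a $B$ for each $v \in T$. By \cref{def:top_tier}, $v$ belongs to at least one minimal quorum $\Min{U} \in \Min{\Quorums}$. I would take $B := (\V \setminus \Min{U}) \cup \set{v}$, equivalently $B = \V \setminus W$ with $W := \Min{U} \setminus \set{v}$. Verifying the two required properties is then immediate via fact (ii). On one hand, $\V \setminus B = W = \Min{U} \setminus \set{v}$ is a proper subset of the minimal quorum $\Min{U}$ and hence contains no quorum---any contained quorum would contradict the minimality of $\Min{U}$ (cf.\ \cref{def:minimal_set})---so $B$ is blocking. On the other hand, $\V \setminus (B \setminus \set{v}) = \Min{U}$ is itself a quorum, so $B \setminus \set{v}$ fails to be blocking. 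Thus $v$ is critical for $B$, and the reduction above produces a minimal blocking set containing $v$.

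The one genuinely delicate point---the main obstacle---is the reduction in the first paragraph: one must be careful to argue that monotonicity of blocking is precisely what prevents a minimal blocking subset of $B$ from dropping the critical node $v$. Once that is in place, the construction is essentially forced, since deleting a single node from a minimal quorum yields the canonical quorum-free witness $W$ to $v$'s criticality. I would also remark that this argument meshes with \cref{cor:min_block_from_min_quorum}: the blocking sets here are phrased directly in terms of (minimal) quorums, so the whole proof can be carried out using only $\Min{\Quorums}$ rather than the full set $\Quorums$.
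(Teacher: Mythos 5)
Your proof is correct and follows essentially the same strategy as the paper's: pick a minimal quorum $\Min{U}$ containing $v$, build a blocking set in which $v$ is critical from the complement of $\Min{U}$, and extract a minimal blocking set, which by criticality must retain $v$. The only cosmetic difference is that you take the complement within all of $\V$ and verify blocking via the \emph{complement-contains-no-quorum} characterization, whereas the paper uses $\set{v} \cup (T \setminus \Min{U})$ and checks intersection with every minimal quorum directly; both verifications hinge on the same minimality fact, so the arguments coincide in substance.
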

\begin{proof}
  Let $v \in T$ be an arbitrary top tier node and
  $\Min{U} \in \Min{\Quorums}$ an arbitrary minimal quorum such that $v \in \Min{U}$
  (recall that $T = \bigcup{\Min{\Quorums}}$; \cref{def:top_tier}).
  $T \setminus \Min{U}$ intersects every $\Min{U}^\prime \in \Min{\Quorums} \setminus \set{\Min{U}}$,
  as otherwise there would be a $\Min{U}^\prime \in \Min{\Quorums}$ such that $\Min{U}^\prime \subset \Min{U}$
  (i.e., $\Min{U}$ would not be a minimal quorum).
  Therefore,
  $T \setminus \Min{U}$ is a blocking set w.r.t. $\Min{\Quorums} \setminus \set{\Min{U}}$ and
  $B^\prime = \set{v} \cup T \setminus \Min{U}$ is a blocking set w.r.t. $\Min{\Quorums}$.
  $B^\prime \setminus \set{v}$ is not a blocking set w.r.t. $\Min{\Quorums}$ because it doesn't intersect $\Min{U}$.
  Hence, all $\Min{B} \in \Min{\BlockSetSet}$ such that $\Min{B} \subseteq B^\prime$
  (and there must be at least one---$B^\prime$---because $B^\prime$ is a blocking set w.r.t. $\Min{\Quorums}$)
  must contain $v$.
  Hence the FBAS has at least one
  $\Min{B} \in \Min{\BlockSetSet}$ that contains $v$.
  \qed
\end{proof}

\begin{theorem}[Bocking sets in non-nested symmetric top tier]
  \label{theorem:symmetric_top_tier_blocking_cardinalities}
  For an FBAS $(\V, \Q)$ with a symmetric top tier $T \subseteq \V$, $\tts := |T|$
  such that $\forall v \in T: \Q(v) = \qset(v, (T, \emptyset, t))$ it holds that:
  All minimal blocking sets $\Min{B} \in \Min{\BlockSetSet}$ have cardinality $\max(\tts - t + 1, 0)$.
\end{theorem}
\begin{proof}
  We observe that for any $v \in T$,
  $\Q(v) = \set{q \subseteq \V : v \in q \land \len{q \cap T} \geq t}$ (\cref{def:quorum_set,def:qset_function}).
  A $U \subset T$ is therefore a quorum in $(\V, \Q)$ iff $\len{U} \geq t$ (\cref{def:quorum}).
  As all $U \subset T$ with $\len{U} \geq t$ are quorums in $(\V, \Q)$,
  the minimal quorums in $(\V, \Q)$ are exactly
  $\Min{\Quorums} = \set{\Min{U} \subseteq T, \len{\Min{U}} = t}$.
  Then:

  For all $B \subseteq T$ with $\len{B} = \tts - t + 1$ it holds that $\forall U^\prime \subseteq T \setminus B : \len{U^\prime} = t - 1 < t$.
  Hence, no $U^\prime \subseteq T \setminus B$ is a quorum, there are no quorums that are disjoint with $B$
  and $B$ is a blocking set (\cref{def:blocking_set}).
  $B$ is furthermore a minimal blocking set,
  as for any $B^\prime \subset B$ it holds that $U = T \setminus B^\prime$ is a quorum (as $\len{U} \geq t$),
  and so $B^\prime$ is not a blocking set.

\end{proof}

\begin{theorem}[Splitting sets in non-nested symmetric top tier]
  \label{theorem:symmetric_top_tier_splitting_cardinalities}
  For an FBAS $(\V, \Q)$ that consists entirely of a symmetric top tier $T = \V$, $\tts := |T|$
  such that
  $\forall v \in \V: \Q(v) = \qset(v, (\V, \emptyset, t))$ it holds that
  all minimal splitting sets $\Min{S} \in \Min{\SplitSetSet}$ have cardinality $\max(2t - \tts, 0)$.
\end{theorem}
\begin{proof}
  Like in \cref{theorem:symmetric_top_tier_blocking_cardinalities},
  we observe that
  the minimal quorums in $(\V, \Q)$ are exactly
  $\Min{\Quorums} = \set{\Min{U} \subseteq T, \len{\Min{U}} = t}$.
  Then:

  Let $\Min{S} \in \Min{\SplitSetSet}$ be an arbitrary minimal splitting set for $(\V, \Q)$.
  If $2t - m \leq 0$, there exist two minimal quorums $\Min{U}_1, \Min{U}_2 \in \Min{\Quorums}$ (with cardinality $t$)
  that do not intersect.
  There is then only one $\Min{S} = \emptyset$ and the cardinality of all minimal splitting sets is trivially $0$.
  In the following, we assume that $2t - m > 0$ and $(\V, \Q)$ therefore enjoys quorum intersection.
  Since $(\V, \Q)$ consists entirely of a symmetric top tier, no $v \in \V$ is a quorum expander.
  Splitting sets must therefore contain an intersection of at least one pair of minimal quorums
  (for illustration, cf. the proof of \cref{theorem:minimal_splitting_set_nodes_are_quorum_expanders_or_top_tier}).
  There are therefore at least two minimal quorums
  $\Min{U}_1, \Min{U}_2 \in \Min{\Quorums}$ such that $\Min{S} = \Min{U}_1 \cap \Min{U}_2$.
  Let $U = \Min{U}_1 \cup \Min{U}_2$.
  $N^\prime = T \setminus U$ must be empty,
  otherwise we could,
  with an arbitrary $N^{\prime\prime} \subseteq \Min{S},\len{N^{\prime\prime}} = \len{N^\prime}$
  find a minimal quorum $\Min{U}_3 = (\Min{U}_2 \setminus N^{\prime\prime}) \cup N^\prime$
  such that $\Min{U}_1 \cap \Min{U}_3 \subset \Min{S}$
  (i.e., $\Min{S}$ is not minimal).
  It therefore holds that $U = T$ and,
  since,
  $\len{\Min{U}_1} = \len{\Min{U}_2} = t$,
  $\len{\Min{S}} = 2t - m$.
  \qed
\end{proof}

\section{Example analysis: toy network with cascading failures}%
\label{app:cascading_example}

Consider the FBAS $(\V, \Q)$ with $\V = \set{0, 1, 2, 3, 4, 5, 6}$ and $\Q$ such that:
\begin{align*}
  \Q(0) &= \qset(0, (\set{0, 1, 2}, \emptyset, 3))\\
  \Q(1) &= \qset(1, (\set{0, 1, 2, 3}, \emptyset, 3))\\
  \Q(2) &= \qset(2, (\set{0, 1, 2, 3, 4, 5, 6}, \emptyset, 5))\\
  \Q(3) &= \qset(3, (\set{0, 1, 2, 3, 4, 5, 6}, \emptyset, 5))\\
  \Q(4) &= \qset(4, (\set{0, 1, 2, 3, 4, 5, 6}, \emptyset, 5))\\
  \Q(5) &= \qset(5, (\set{0, 1, 2, 3, 4, 5, 6}, \emptyset, 5))\\
  \Q(6) &= \qset(6, (\set{0, 1, 2, 3, 4, 5, 6}, \emptyset, 5))\\
\end{align*}

This $\Q$ can be the result of a scenario in which all $v \in \V$ apply the QSC policy
\ref{qsc:all_neighbors} (\cref{sub:individual_qsc}) based on following graph $G$
(unidirectional edges highlighted as dashed lines):

\begin{center}
  \begin{tikzpicture}
    \def \radius {1.5cm}

    \foreach \s in {0,...,6}
    {
      \node[draw, circle] (v\s) at ({135 - 360/7 * \s}:\radius) {$\s$};
    }
    \path [<->] (v0) edge [draw] (v1);
    \path [<->] (v0) edge [draw] (v2);
    \path [->, dashed]  (v3) edge [draw] (v0);
    \path [->, dashed]  (v4) edge [draw] (v0);
    \path [->, dashed]  (v5) edge [draw] (v0);
    \path [->, dashed]  (v6) edge [draw] (v0);

    \path [<->] (v1) edge [draw] (v2);
    \path [<->] (v1) edge [draw] (v3);
    \path [->, dashed]  (v4) edge [draw] (v1);
    \path [->, dashed]  (v5) edge [draw] (v1);
    \path [->, dashed]  (v6) edge [draw] (v1);

    \path [<->] (v2) edge [draw] (v3);
    \path [<->] (v2) edge [draw] (v4);
    \path [<->] (v2) edge [draw] (v5);
    \path [<->] (v2) edge [draw] (v6);

    \path [<->] (v3) edge [draw] (v4);
    \path [<->] (v3) edge [draw] (v5);
    \path [<->] (v3) edge [draw] (v6);

    \path [<->] (v4) edge [draw] (v5);
    \path [<->] (v4) edge [draw] (v6);

    \path [<->] (v5) edge [draw] (v6);
  \end{tikzpicture}
\end{center}

We find the minimal blocking sets $\Min{\BlockSetSet} \subset 2^{\V}$ of $(\V, \Q)$ using our analysis tool (cf. \cref{sec:algorithms}):
\begin{align*}
  \Min{\BlockSetSet} = \set{&\set{2},\set{1,3},\set{1,4},\set{1,5},\set{1,6},\set{0,3},\set{3,4,5},\set{3,4,6},\\
                            &\set{3,5,6},\set{0,4,5},\set{0,4,6},\set{0,5,6},\set{4,5,6}}
\end{align*}

Despite the fact that most nodes in $\V$ have very \enquote{robust} quorum sets---%
being able to tolerate up to $f = 2$ failures,
which corresponds to a minimal blocking set of cardinality $3$---%
the smallest blocking set of $(\V, \Q)$, $\set{2}$, actually has cardinality $1$.
Consider a failure of node $2$.
Node $0$'s quorum set ($\Q(0)$) is not satisfiable anymore, so that $0$ de-facto fails as well.
With both $0$ and $2$ failed, node $1$, being able to tolerate only $f=1$ failures,
becomes unsatisfiable as well.
With three nodes having de-facto failed, none of the remaining nodes' quorum sets can be satisfied anymore,
so that $(\V, \Q)$ loses quorum availability.
Enabled through the \enquote{weak} quorum sets of nodes $0$ and $1$,
the failure of $2$ triggers what we would call a \emph{cascading failure}.
The liveness \enquote{buffer} of $(\V, \Q)$,
as represented by its smallest blocking sets,
is determined by the most easily dissatisfied nodes in its top tier.

We see a similar, although weaker effect with regards to minimal splitting sets.
In the present example, there are fewer minimal splitting sets
$\Min{\SplitSetSet} \subset 2^{\V}$
than in an \enquote{ideal} FBAS of the same size
(cf. \ref{qsc:ideal_open} in \cref{sub:qsc_policy})
but all but one of them have the \enquote{ideal} cardinality $3$ or a larger cardinality:
\begin{align*}
  \Min{\SplitSetSet} = \set{&\set{1,2},\set{0,1,3},\set{0,1,4},\set{0,2,3},\set{0,2,4},\set{0,3,4},\\
                            &\set{1,3,4,5},\set{2,3,4,5}}
\end{align*}

Note that unlike blocking sets that can compromise liveness for all nodes in an FBAS,
splitting sets are usually more relevant to some nodes than they are to others.
For example, the smallest splitting set of $(\V,\Q)$,
$\set{1,2}$,
can potentially cause node $0$ to diverge from the remainder of the network---this is likely a bigger problem for node $0$ than for nodes $\set{3,4,5}$
which would remain \enquote{in sync}.

\section{Example analysis: Stellar network}%
\label{app:stellar_example}

As an example for the results obtainable using the proposed methodology and tooling,
we will now present a short study into the Stellar FBAS~\cite{lokhava2019stellar_payments}\footnote{
  We maintain an interactive version of this study at: \url{https://trudi.weizenbaum-institut.de/stellar_analysis/}
}.
Our analysis methodology has furthermore been integrated into \emph{Stellarbeat}\footnote{
  \url{https://stellarbeat.io/}
},
a popular monitoring website for the Stellar network.

For the presented study, we obtain daily snapshots of the Stellar FBAS from Stellarbeat\footnote{
  Data from Stellarbeat was also used in previous academic studies such as \cite{kim2019stellar_secure}.
},
for the interval July 2019 -- January 2022.
From the same source, we also obtain data for allocating nodes,
here individual network hosts running the Stellar software,
to the \emph{organizations} they belong to.
We use this data to merge nodes belonging to the same organization,
so that nodes in the subsequent discussion represent distinct organizations as opposed to individual physical machines\footnote{
  Nodes can also be merged based on other criteria, such as their country or ISP, revealing different threat scenarios.
  For example, for a snapshot of the Stellar FBAS from November 2020,
  we determine that a certain large cloud hosting provider forms a blocking set---i.e.,
  has the power to unilaterally compromise liveness.
}.
For maintaining the correctness of our results, we merge nodes in this way \emph{after} completing the analyses.
\emph{Prior} to analysis,
we filter out all nodes that are marked as inactive or induce one-node quorums
(i.e., nodes $v$ with a configuration such as $\Q(v) = \set{v}$;
we assume that this represents an accidental misconfiguration).
We furthermore restrict our minimal splitting sets analyses to a core subset of nodes for each FBAS snapshot,
namely to the top tier and all nodes transitively referenced by top tier nodes' quorum sets.
Doing so gives us more informative aggregate results as
forming a splitting set that affects only a few edge nodes is both significantly easier and less impactful than forming a splitting set that can cause top tier nodes to diverge.
All analyses were performed using the algorithms and implementation introduced in \cref{sec:algorithms}.
The results of our study are presented in \cref{fig:stellar}.

\begin{figure}[htpb]
  \centering
  \includegraphics[width=\linewidth]{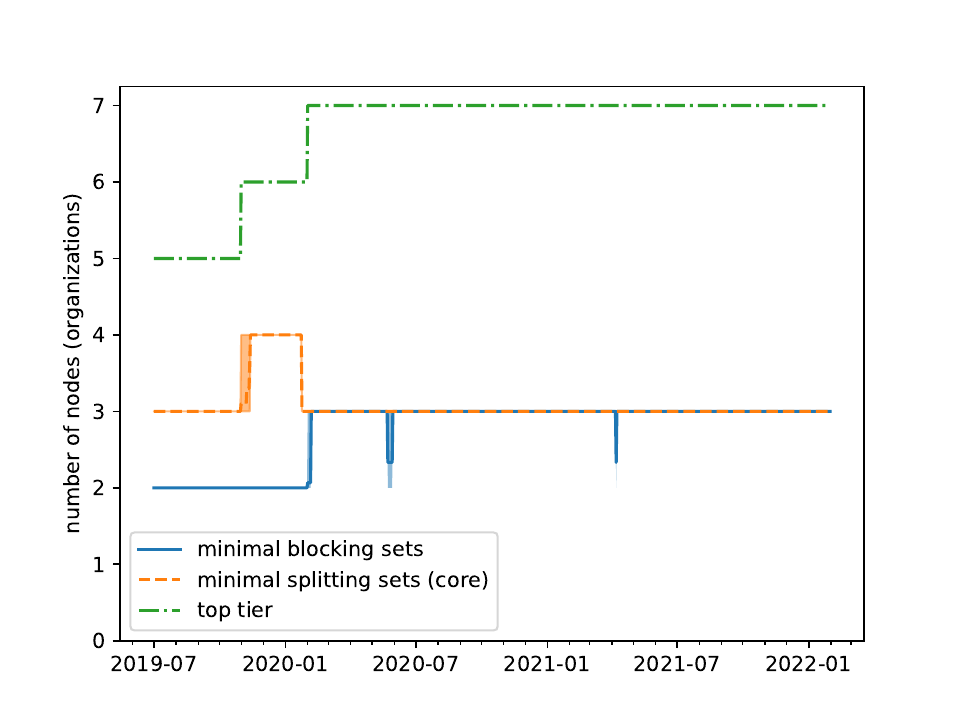}
  \caption{
    Analysis results for daily snapshots of the Stellar network.
    For each presented FBAS snapshot,
    the plot charts the size of its top tier as well as the mean cardinalities of minimal blocking and minimal splitting sets,
    with area boundaries marking the cardinalities of the smallest and largest respective set.
  }
  \label{fig:stellar}
\end{figure}

The top tier of the Stellar network is growing monotonically through time in the studied interval,
reaching $7$ organizations in February 2020.
The top tiers of most analyzed snapshots are symmetric and resemble (on the organizations level) a classical (non-nested) threshold-based quorum system.
In \cref{fig:stellar}, symmetric top tiers of such a type manifest themselves as data points in which the cardinalities of all minimal blocking sets are identical,
as are the cardinalities of all minimal splitting sets.
During February 2020, the top tier grew by one organization, disturbing the symmetry for a few days.
However, eventually all top tier nodes included the new organization into their quorum sets.
This adaptation suggests that top tier nodes might be reacting to each others' decisions and actively strive towards a symmetric configuration,
as proposed in \cref{sub:symmetric_qsc}.
Furthermore,
the thresholds of top tier quorum sets appear to be chosen based on a \SI{67}{\percent} logic
(balancing liveness and safety risks),
as do most example policies we discuss in \cref{sec:qsc}.

\end{document}